\newtheorem{theorem}{Theorem}[section]
\newtheorem{corollary}[theorem]{Corollary}
\newtheorem{lemma}[theorem]{Lemma}
\newtheorem{claim}[theorem]{Claim}
\newtheorem{proposition}[theorem]{Proposition}
\theoremstyle{definition}
\theoremstyle{remark}
\title{Treewidth is NP-Complete on Cubic Graphs  \\
(and related results)}
\author[1]{Hans~L.~Bodlaender}
\author[2]{Édouard Bonnet}
\author[3]{Lars~Jaffke}
\author[4]{Du{\v{s}}an~Knop}
\author[5]{Paloma~T.~Lima}
\author[6]{Martin~Milani{\v c}}
\author[7]{Sebastian~Ordyniak}
\author[1]{Sukanya~Pandey}
\author[4]{Ond{\v{r}}ej~Such\'{y}}
\affil[1]{Utrecht University, The Netherlands, \texttt{h.l.bodlaender@uu.nl}} 
\affil[2]{LIP, ENS Lyon, France, \texttt{edouard.bonnet@ens-lyon.fr}}
\affil[3]{University of Bergen, Norway, \texttt{lars.jaffke@uib.no}}
\affil[4]{Czech Technical University in Prague, Czech Republic, \texttt{\{dusan.knop,ondrej.suchy\}@fit.cvut.cz}}
\affil[5]{IT University of Copenhagen, Denmark, \texttt{palt@itu.dk}}
\affil[6]{FAMNIT and IAM, University of Primorska, Koper, Slovenia, \texttt{martin.milanic@upr.si}}
\affil[7]{University of Leeds, UK,  \texttt{sordyniak@gmail.com}}
\date{}
\newcommand{\tw}{\mathtt{tw}}
\newcommand{\pw}{\mathtt{pw}}
\newcommand{\cw}{\mathtt{cw}}
\newcommand{\vsn}{\mathtt{vsn}}
\begin{document}

\maketitle

\begin{abstract}
In this paper, we give a very simple proof that 
\textsc{Treewidth} is NP-complete; this proof also shows
NP-completeness on the class of co-bipartite graphs.
We then improve the result by
Bodlaender and Thilikos from 1997 that \textsc{Treewidth} is NP-complete on graphs with maximum degree at most $9$, by
showing that \textsc{Treewidth} is NP-complete on cubic graphs.
\end{abstract}

\section{Introduction}
Treewidth is one of the most studied graph parameters, with many applications for both theoretical investigations as well as for applications. The
problem of deciding the treewidth of a given graph, and finding corresponding tree
decomposition, single-handedly lead to a plethora of studies, including exact algorithms,
algorithms for special graph classes, approximations, upper and lower bound heuristics, parameterised algorithms and more. In this paper, we look at the basic problem to decide, for a given graph $G$ and integer $k$, whether the treewidth of $G$ is at most $k$.

This problem was shown to be NP-complete in 1987 by Arnborg et al.~\cite{ArnborgCP}; 
their proof also gives NP-completeness on co-bipartite graphs. As the treewidth
of a graph (without parallel edges) does not change under subdivision of edges, it
easily follows and is well known that \textsc{Treewidth}  is
NP-complete on bipartite graphs. In 1997, Bodlaender and Thilikos~\cite{BodlaenderT} modified the
construction of Arnborg et al.~and showed that \textsc{Treewidth} remains NP-complete
if we restrict the inputs to graphs with maximum degree 9. In this paper, we
sharpen this bound of 9 to 3. Our proof uses a simple transformation, whose
correctness follows from well-known facts about treewidth and simple insights.
We also give an even simpler proof of the NP-completeness of \textsc{Treewidth} 
on arbitrary (and on co-bipartite) graphs.
We obtain a number of corollaries of the results,
in particular NP-completeness of \textsc{Treewidth} on $d$-regular
graphs for each fixed $d\geq 3$, and for graphs that can be embedded
in a $3$-dimensional grid.

Our techniques are based on the techniques in \cite{ArnborgCP} and \cite{BodlaenderT}
with streamlined and simplified arguments, and some additional new but elementary
ideas. As a starting point for the reductions, we use the NP-complete problems
\textsc{Cutwidth} on cubic graphs and \textsc{Pathwidth}; the
NP-completeness proofs for these were given by Monien and Sudborough~\cite{MonienS}
in 1987.

This paper is organised as follows. In \Cref{section:definitions}, we give
basic definitions and some well-known results on treewidth.
In \Cref{section:co-bipartite}, we give a simple proof of the NP-completeness
of \textsc{Treewidth} on co-bipartite graphs that uses an elementary transformation from pathwidth. \Cref{section:main} gives our main result: NP-completeness
for \textsc{Treewidth} on cubic graphs (i.e.\  graphs with each vertex of degree 3).
In \Cref{section:corollaries}, we derive as consequences some additional NP-completeness results: on $d$-regular graphs for each fixed $d$ and on
graphs that can be embedded in a 3-dimensional grid.
Some final remarks are made in \Cref{section:conclusions}.

\section{Definitions and preliminaries}
\label{section:definitions}
Throughout the paper, we denote the number of vertices of the graph $G$ by $n$. 
All graphs considered in this paper are undirected.
A~graph $G$ is $d$-regular if each vertex has degree $d$. We say that a graph $G$ is 
\emph{cubic} if $G$ is 3-regular. If each vertex of $G$ has
degree at most 3, we say that $G$ is \emph{subcubic}.
All numbers considered are assumed to be integers, and an interval $[a,b]$ denotes the set of integers $\{a, a+1, a+2, \ldots, b-1, b\}$.
Furthermore, for a positive integer $a$, we denote by $[a]$ the interval $[1,a]$.
A~graph $G$ is a \emph{minor} of a graph $H$, if $G$ can be obtained from $H$
by zero or more vertex deletions, edge deletions, and edge contractions.
For a graph $G$ and a set of vertices $A\subseteq V(G)$, we
write $G+{\it clique}(A)$ for the graph obtained by adding an edge between
each pair of distinct non-adjacent vertices in $A$, i.e.\  by turning $A$ into
a clique.

A~\emph{tree decomposition} of a graph $G$ is a pair $(T,\beta)$ such that $T$ is a tree and $\beta$ is a~mapping assigning each node $x$ of $T$ to a \emph{bag} $\beta(x) \subseteq V(G)$, satisfying the following conditions:
every vertex of $G$ belongs to some bag,
for every edge of $G$ there exists a bag containing both endpoints of the edge, and for every vertex of $G$, the set of nodes $x$ of $T$ such that $v\in \beta(x)$ induces a connected subtree of $T$.
The \emph{width} of a tree decomposition $(T,\beta)$ is the maximum, over all nodes $x$ of $T$, of the value of $|\beta(x)|-1$.
The \emph{treewidth} of a graph $G$, denoted by $\tw(G)$, is the minimum width of a tree decomposition of $G$.
Path decompositions and pathwidth (denoted by $\pw(G)$) are defined analogously, but with the additional requirement that the tree $T$ is a path.

We use a number of well-known facts about treewidth and tree decompositions.

\begin{lemma}[Folklore]\label{lemma:basics}
Let $G$ be a graph, and $(T,\beta)$ a tree decomposition of width $k$ of $G$.
Then the following statements hold.
\begin{enumerate}
    \item Let $W$ be a clique in $G$. 
    Then, there is a node $x$ of $T$ with $W\subseteq \beta (x)$. 
    \item Suppose $v,w\in V(G)$, $\{v,w\}\not\in E(G)$. If there is a node $x$ of
    $T$, with $v,w \in \beta(x)$, then $(T,\beta)$ is a tree decomposition of
    width $k$ of the graph obtained by adding the edge $\{v,w\}$ to $G$.
    \item Suppose $W\subseteq V(G)$. 
    Then, there is a node $x$ in $T$ such that when we remove $\beta(x)$ and
    all incident edges from $G$, then each connected component of $G$ contains
    at most $n/2$ vertices of $W$.
    \item Let $y$ be a leaf of $T$, with neighbour $y'$. 
    If $\beta(y)\subseteq \beta(y')$, 
    then removing $y$ with its bag from the tree decomposition $(T,\beta)$ yields
    another tree decomposition of $G$ of width at most $k$.
    \item If $H$ is a minor of $G$, then $\tw(H)\leq \tw(G)$, and $\pw(H)\leq \pw(G)$.
\end{enumerate}  
\end{lemma}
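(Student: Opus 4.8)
My plan is to prove each of the five statements by a direct appeal to the definition of tree decomposition; only the third (a balanced-separator statement) needs a genuine idea and will be where I spend most of the effort.

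For the first statement I would invoke the Helly property of subtrees of a tree. For $v\in W$ let $T_v$ be the set of nodes $x$ of $T$ with $v\in\beta(x)$; by definition this induces a subtree of $T$, and for any two $v,w\in W$ the edge $\{v,w\}$ forces some bag to contain both, so $T_v\cap T_w\neq\emptyset$. Since subtrees of a tree have the Helly property (itself provable by a short induction on the number of subtrees), $\bigcap_{v\in W}T_v\neq\emptyset$, and any node $x$ in this intersection satisfies $W\subseteq\beta(x)$ (the case $W=\emptyset$ being trivial). The second statement is immediate: adding the edge $\{v,w\}$ introduces only the single new requirement that some bag contain both $v$ and $w$, which holds by hypothesis, while the vertex-covering condition, the connectivity condition, and the width are all unaffected.

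For the third statement I would argue by orienting the edges of $T$. Put weight $1$ on each vertex of $W$ and $0$ elsewhere, so the total weight is $|W|\le n$; for an edge $e$ of $T$ with endpoint $y$, let $V_y^e$ be the union of the bags over the nodes lying in the component of $T-e$ that contains $y$. Connectivity of the bag-subtrees gives $V_a^e\cap V_b^e=\beta(a)\cap\beta(b)$ for $e=\{a,b\}$, so the two ``exclusive'' parts $V_a^e\setminus\beta(b)$ and $V_b^e\setminus\beta(a)$ are disjoint; hence at most one of them carries weight exceeding $|W|/2$, and I orient $e$ toward that side (and arbitrarily if neither side is heavy). As $T$ is a tree, the resulting orientation has a sink $x$, i.e.\ every edge incident to $x$ points toward $x$. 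One then checks that every connected component of $G-\beta(x)$ is contained in $V_b^e\setminus\beta(x)$ for some edge $e=\{x,b\}$ incident to $x$, and such a set has weight at most $|W|/2\le n/2$ by the choice of orientation. I expect the main obstacle to be exactly this last verification — that a component of $G-\beta(x)$ cannot ``straddle'' $x$ — which follows because the bag-subtrees of the vertices of a connected subgraph avoiding $\beta(x)$ are pairwise intersecting and each avoids $x$, hence all lie in a single component of $T-x$.

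The fourth statement is pure inspection: deleting the leaf $y$ keeps $T$ a tree, every vertex of $\beta(y)$ still occurs (in $\beta(y')$), every edge is still covered, and for each vertex the set of bags containing it stays a subtree, since removing a leaf $y$ from a subtree that — by connectivity — must then also contain $y'$ leaves that set connected; the width cannot increase because no bag was enlarged. Finally, for the fifth statement I would handle the three minor operations separately. Edge deletion leaves any tree decomposition or path decomposition valid, with the same width. For a vertex deletion, remove that vertex from every bag. For the contraction of an edge $\{u,v\}$ into a new vertex $w$, replace $u$ and $v$ by $w$ in every bag: the set of bags containing $w$ is the union of the bag-subtrees of $u$ and of $v$, which intersect because $\{u,v\}\in E(G)$ and hence is again a subtree; every edge of the contracted graph is still covered; and no bag grows (a bag that held both $u$ and $v$ shrinks by one). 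In all three cases the underlying tree is untouched, so the same reasoning applies verbatim to path decompositions, giving the corresponding inequality for $\pw$.
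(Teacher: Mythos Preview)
The paper does not prove this lemma at all; it is stated as folklore and left without proof. Your argument is correct and entirely standard: the Helly property for part~(1), the orientation/sink argument for the balanced-separator statement~(3), and direct verification for~(2),~(4),~(5) are exactly the usual textbook proofs. One incidental remark: your proof of~(3) actually yields the stronger bound $|W|/2$ rather than the $n/2$ stated in the lemma, and in fact the paper's later application (in the brick-wall step) silently relies on the $|W|/2$ version, so your sharper conclusion is the one that is genuinely needed.
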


A graph $G$ is \emph{co-bipartite} if $V(G)=A \cup B$ with $A$ a clique and $B$ a clique (that is, the complement of $G$ is bipartite). 
The following fact is also well known, and follows implicitly from
the proofs of Arnborg~et~al.~\cite{ArnborgCP}. For completeness, we give a proof here.

\begin{lemma}[See, e.g.~\cite{ArnborgCP}]
    Let $G$ be a co-bipartite graph, with  $V(G)=A \cup B$ where $A$ and $B$ are
    cliques. Then:
    \begin{enumerate}
        \item $\tw(G)=\pw(G)$.
        \item $G$ has a path decomposition $(P,\beta)$ {with width equal to $\tw(G)$} such that $A\subseteq \beta(p_1)$ and $B\subseteq \beta(p_r)$, where $p_1$ and $p_r$ are the two endpoints of $P$.
    \end{enumerate}
    \label{lemma:co-bipartite}
\end{lemma}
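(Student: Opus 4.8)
The plan is to derive both statements from one construction: inside an optimal tree decomposition of $G$, locate the path joining a bag containing all of $A$ to a bag containing all of $B$, and show that the bags along this path already form a path decomposition. Since every path decomposition is a tree decomposition, $\pw(G)\ge\tw(G)$ holds for every graph, so it suffices to produce a path decomposition $(P,\beta')$ of $G$ of width at most $\tw(G)$ whose two end bags contain $A$ and $B$, respectively; part~1 then follows from $\tw(G)\le\pw(G)\le\mathrm{width}(P,\beta')\le\tw(G)$, and part~2 is immediate.

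Concretely, fix a tree decomposition $(T,\beta)$ of $G$ of width $\tw(G)$. Since $A$ and $B$ are cliques, \Cref{lemma:basics}(1) gives nodes $x_A,x_B$ of $T$ with $A\subseteq\beta(x_A)$ and $B\subseteq\beta(x_B)$. Let $P$ be the unique path of $T$ from $x_A$ to $x_B$, and let $\beta'$ be the restriction of $\beta$ to the nodes of $P$. The width of $(P,\beta')$ is at most that of $(T,\beta)$, so it only remains to verify that $(P,\beta')$ is a path decomposition of $G$. Vertex coverage is free because $V(G)=A\cup B$ with $A\subseteq\beta'(x_A)$ and $B\subseteq\beta'(x_B)$. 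Connectivity is inherited from $(T,\beta)$: for any vertex $v$, the nodes of $T$ whose bag contains $v$ induce a subtree $T_v$, and $T_v$ meets the path $P$ in a subpath.

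The one case of edge coverage that is not immediate is an edge $\{u,v\}$ with $u\in A$ and $v\in B$ (if both endpoints lie in $A$ they share $\beta'(x_A)$, and symmetrically for $B$). For such an edge I would pick a node $w$ of $T$ with $\{u,v\}\subseteq\beta(w)$; since $T_u$ is connected and contains both $w$ and $x_A$, the vertex $u$ lies in every bag along the path from $w$ to $x_A$ in $T$, and likewise $v$ lies in every bag along the path from $w$ to $x_B$. The median node $z$ of $w$, $x_A$, $x_B$ in $T$ lies on all three of these: the path from $w$ to $x_A$, the path from $w$ to $x_B$, and $P$; hence $\{u,v\}\subseteq\beta(z)=\beta'(z)$, completing the check.

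The construction is essentially forced, since $V(G)=A\cup B$ makes the two clique bags $\beta(x_A),\beta(x_B)$ jointly witness every vertex, so passing to the path between them loses nothing. The only step that needs an argument rather than bookkeeping is the median observation for cross edges between $A$ and $B$, and I expect no real obstacle there.
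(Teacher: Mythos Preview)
Your proof is correct and follows essentially the same approach as the paper: restrict an optimal tree decomposition to the path between a bag containing $A$ and a bag containing $B$. The only cosmetic difference is in the verification step---the paper iteratively prunes leaves off $T\setminus P$ using \Cref{lemma:basics}(4) (each such leaf's bag lies in its neighbour's, since every vertex is in $A$ or $B$), whereas you keep $P$ directly and check edge coverage for cross edges via the median of $w,x_A,x_B$; both arguments amount to the same observation.
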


\begin{proof}
    Suppose $(T,\beta)$ is a tree decomposition of $G$ of width $\tw(G)$. 
    By \Cref{lemma:basics}(1), there is a node $x$ in $T$ with $A \subseteq \beta(x)$, and a node $y$ in $T$ with $B \subseteq \beta(y)$. 
    Let $P$ be the path
    from $x$ to $y$ in $T$. 

If $T$ has nodes not in $P$, then we can apply the following step. Take a leaf $z$ of~$T$, not in $P$. Let $z'$ be the neighbour of $z$ in $T$. 
For each $v\in A\cap \beta(z)$, it holds that $v\in \beta(z')$ as $z'$ is on the path from $z$ to $x$, and
for each $v\in B\cap \beta(z)$, it holds that $v\in \beta(z')$ as $z'$ is on the path from $z$ to $y$. So, by \Cref{lemma:basics}(4), we can remove $z$ from $T$ and
obtain another tree decomposition of $G$. 
Repeating this step as long as possible gives the desired result.
\end{proof}

The \emph{vertex separation number} of a graph $G$ is denoted by $\vsn(G)$ and defined as the minimum, over all orderings $\sigma = (v_1,\ldots, v_n)$ of the vertex set of $G$, of the maximum, over all $i\in \{1,\ldots, n\}$, of the number of vertices $v_j$ such that $j>i$ and $v_j$ has a neighbour in $\{v_1,\ldots, v_i\}$.
Kinnersley proved the following characterisation of pathwidth.

\begin{theorem}[Kinnersley~\cite{MR1178214}]\label{theorem:pw-equals-vsn}\label{pw-equals-vsn}
The pathwidth of every graph equals its vertex separation number.
\end{theorem}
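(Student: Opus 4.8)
The plan is to prove the two inequalities $\pw(G)\le\vsn(G)$ and $\vsn(G)\le\pw(G)$ separately, each by an explicit construction: one turning an optimal vertex ordering into a path decomposition, the other turning an optimal path decomposition into a vertex ordering.

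For $\pw(G)\le\vsn(G)$, I would fix an ordering $\sigma=(v_1,\dots,v_n)$ of $V(G)$ attaining $k:=\vsn(G)$, and for $i\in[n]$ set $\delta_i:=\{v_j : j>i\text{ and }v_j\text{ has a neighbour in }\{v_1,\dots,v_i\}\}$, so that $|\delta_i|\le k$ for all $i$ by the choice of $\sigma$. Let $(P,\beta)$ be the pair where $P$ is the path $p_1p_2\cdots p_n$ and $\beta(p_i):=\{v_i\}\cup\delta_i$. Checking the three axioms is then routine: every $v_i$ lies in $\beta(p_i)$; for an edge $\{v_a,v_b\}$ with $a<b$ one has $v_b\in\delta_a$, hence $\{v_a,v_b\}\subseteq\beta(p_a)$; and $v_j$ lies in exactly those bags $\beta(p_i)$ with $i=j$ or ($i<j$ and $v_j$ has a neighbour among $v_1,\dots,v_i$), which is the contiguous set $\{e(j),e(j)+1,\dots,j\}$, where $e(j)$ is the least index of a neighbour of $v_j$ preceding it in $\sigma$ (and $e(j)=j$ if there is none). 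Since $v_i\notin\delta_i$, every bag has size at most $k+1$, so the width is at most $k$.

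For the reverse inequality $\vsn(G)\le\pw(G)$, I would take a path decomposition with bags $Y_1,Y_2,\dots,Y_m$ listed along the underlying path, of width $k:=\pw(G)$, and for each vertex $v$ let $r(v):=\max\{t : v\in Y_t\}$. The crucial choice — which I expect to be the only genuinely non-obvious point — is to order the vertices by \emph{last} appearance: take $\sigma=(v_1,\dots,v_n)$ with $r(v_1)\le r(v_2)\le\cdots\le r(v_n)$, ties broken arbitrarily. (Ordering by first appearance, seemingly just as natural, does not work: it controls the ``inner'' boundary of a prefix, i.e.\ the already-listed vertices having a later neighbour, whereas $\vsn$ asks about the ``outer'' boundary.) Now fix $i$ and let $v_j$ with $j>i$ have a neighbour $v_\ell$ with $\ell\le i$; the edge $\{v_\ell,v_j\}$ lies in some bag $Y_s$. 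From $v_\ell\in Y_s$ we get $s\le r(v_\ell)\le r(v_i)$. The indices of the bags containing $v_j$ form an interval, and this interval contains $s$ (hence a number that is $\le r(v_i)$) as well as $r(v_j)\ge r(v_i)$; therefore it contains $r(v_i)$, i.e.\ $v_j\in Y_{r(v_i)}$. Thus every ``right neighbour'' of $\{v_1,\dots,v_i\}$ lies in the single bag $Y_{r(v_i)}$, which also contains $v_i$ but not these right neighbours, so their number is at most $|Y_{r(v_i)}|-1\le k$; hence $\sigma$ witnesses $\vsn(G)\le k$.

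Combining the two inequalities yields $\pw(G)=\vsn(G)$. Note that neither direction requires normalising the path decomposition (e.g.\ to a ``nice'' one); beyond the two constructions, the only mildly delicate points are the connectivity check in the first and the short interval argument establishing $v_j\in Y_{r(v_i)}$ in the second, with the choice of ordering by last appearance being the genuine obstacle to anticipate.
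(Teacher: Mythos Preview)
The paper does not give its own proof of this theorem; it is stated with a citation to Kinnersley and used as a black box. So there is nothing to compare your argument against in the paper itself.

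That said, your argument is the standard one and is correct. One sentence is garbled: in the second direction you write that $Y_{r(v_i)}$ ``also contains $v_i$ but not these right neighbours'', which literally contradicts what you just established. What you mean (and what makes the count work) is that $v_i\in Y_{r(v_i)}$ and $v_i$ is \emph{not among} the right neighbours, so the right neighbours together with $v_i$ form at most $|Y_{r(v_i)}|$ distinct vertices, giving at most $|Y_{r(v_i)}|-1\le k$ right neighbours. With that phrasing fixed, both directions go through as you describe.
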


\textsc{Treewidth} is the following decision problem: Given a graph $G$ and an integer $k$, is the treewidth of $G$ at most $k$?
The problems \textsc{Pathwidth} and \textsc{Vertex Separation Number} are defined analogously.

\begin{sloppypar}
In 1987, Arnborg, Corneil, and Proskurowski established NP-completeness of \textsc{Treewidth} in the class of co-bipartite graphs~\cite{ArnborgCP}. 
Ten years later, Bodlaender and Thilikos~\cite{BodlaenderT} proved that 
\textsc{Treewidth} is NP-complete on graphs with maximum degree at most $9$. 
Monien and Sudborough~\cite{MonienS} proved that \textsc{Vertex Separation Number} is NP-complete on planar graphs with maximum degree at most $3$.
Combining this result with \Cref{pw-equals-vsn} directly shows the following.
\end{sloppypar}

\begin{theorem}[Monien and Sudborough~\cite{MonienS}]
    \textsc{Pathwidth} is NP-complete on planar graphs with maximum degree at most $3$.
\end{theorem}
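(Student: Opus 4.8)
The statement to prove is that \textsc{Pathwidth} is NP-complete on planar graphs with maximum degree at most $3$, given as a direct corollary of Monien--Sudborough's NP-completeness of \textsc{Vertex Separation Number} on the same class, together with Kinnersley's theorem $\pw = \vsn$.

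The plan is essentially a two-line argument: membership in NP plus a reduction that is in fact the identity map. For membership in NP, I would observe that a path decomposition of width at most $k$ serves as a polynomial-size certificate: one may assume the number of bags is at most $n$ (by Lemma~\ref{lemma:basics}(4), repeatedly deleting a leaf bag contained in its neighbour, which terminates with a decomposition on at most $n$ nodes), so the certificate has size $O(nk) = O(n^2)$, and the three tree-decomposition axioms restricted to a path are checkable in polynomial time. This shows \textsc{Pathwidth} is in NP on all graphs, hence in particular on planar subcubic graphs.

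For NP-hardness I would invoke \Cref{pw-equals-vsn}: for every graph $G$ and every integer $k$, $\pw(G) \le k$ if and only if $\vsn(G) \le k$. Therefore the identity transformation $(G,k) \mapsto (G,k)$ is a (trivially polynomial-time) reduction from \textsc{Vertex Separation Number} to \textsc{Pathwidth}, and it preserves the graph entirely, so if the input $G$ is planar with maximum degree at most $3$, so is the output. Since Monien and Sudborough~\cite{MonienS} proved \textsc{Vertex Separation Number} is NP-complete on planar graphs of maximum degree at most $3$, the same holds for \textsc{Pathwidth}. Combining the two parts gives the theorem.

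There is no real obstacle here — the content is entirely borrowed from the two cited results, and the only thing to be careful about is stating precisely that Kinnersley's equality holds parameter-by-parameter (i.e.\ as an equality of integers for each fixed graph), which is exactly what \Cref{theorem:pw-equals-vsn} asserts, so the equivalence of the two decision problems is immediate. One could also note, for completeness, that the reduction need not even be the identity: any reduction used by Monien and Sudborough to establish hardness of \textsc{Vertex Separation Number} can be reused verbatim, since its output graphs are planar and subcubic and the target value is unchanged. I expect the write-up to be just a short paragraph.
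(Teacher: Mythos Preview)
Your proposal is correct and matches the paper's approach exactly: the paper states the theorem as an immediate consequence of Monien--Sudborough's NP-completeness of \textsc{Vertex Separation Number} on planar subcubic graphs combined with Kinnersley's equality $\pw=\vsn$ (\Cref{pw-equals-vsn}), without even writing out a separate proof. Your write-up simply makes explicit the NP membership and the identity reduction that the paper leaves implicit.
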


A well-known type of graphs are the \emph{brick walls}. A brick wall with $r$ rows
and $c$ columns has $r\times c$ vertices. We refrain from giving a formal definition here, as the concept is clear from \Cref{figure:brickwall}.

\begin{figure}
    \centering
    \includegraphics{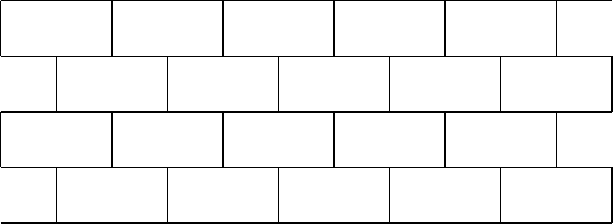}
    \caption{A brick wall with 5 rows and 12 columns.}
    \label{figure:brickwall}
\end{figure}

It is well known that the pathwidth and treewidth of an $n$ by $r$ grid
equal $\min \{n,r\}$, see, e.g.~\cite[Lemmas 87 and 88]{Bodlaender98}. 
Since any brick wall is a subgraph of a grid, the upper bound also holds for brick walls, and 
the standard construction gives the following result.

\begin{lemma}[Folklore]
    Let $B_{r,c}$ be a brick wall with $r$ rows and $c$ columns.
    Then $\tw(B_{r,c}) \leq \pw(B_{r,c}) \leq c$ and there is a path decomposition
    $(P,\beta)$ of $B_{r,c}$ of width $c$ with $\beta(p_1)$ the set of vertices
    on the first column of $B_{r,c}$, and $\beta(p_q)$ the set of vertices on the last column of $B_{r,c}$, where $p_1$ and $p_r$ are the two endpoints of $P$.
    \label{Lemma:pathdecompositionbrickwall}
\end{lemma}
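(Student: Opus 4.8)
The plan is to prove the two claims of \Cref{Lemma:pathdecompositionbrickwall} together by exhibiting an explicit path decomposition of $B_{r,c}$ and then invoking the facts already stated in the excerpt for the upper bounds. Recall that a brick wall with $r$ rows and $c$ columns has its vertices arranged on a grid with $r$ rows and $c$ columns; I will write $v_{i,j}$ for the vertex in row $i$ and column $j$, where $i\in[r]$ and $j\in[c]$. The key structural observation is that every edge of $B_{r,c}$ joins either two vertices in the same row that lie in consecutive columns, or two vertices in the same column that lie in consecutive rows; in particular, each edge is contained in the union of (at most) two consecutive columns. This ``bandwidth-like'' property of the column ordering is exactly what makes a narrow path decomposition possible.

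The construction I would give is the standard sweep across the columns. Define a path $P$ whose nodes are $p_1,\dots,p_c$ (so $p_1$ and $p_c$ are its endpoints; note the statement's ``$p_q$'' and ``$p_r$'' are typos for $p_c$). For each $j\in[c]$ I would like $\beta(p_j)$ to contain the vertices of columns $j$ and $j+1$, but that would give width roughly $2r$, which is too large; instead, the cleaner route is to take a finer path that, between $p_j$ and $p_{j+1}$, inserts $r$ intermediate nodes implementing the transition from column $j$ to column $j+1$ one row at a time. Concretely, start from the bag equal to column $j$, then add $v_{1,j+1}$, drop $v_{1,j}$, add $v_{2,j+1}$, drop $v_{2,j}$, and so on down the rows, arriving at the bag equal to column $j+1$. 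Every intermediate bag has size at most $r+1$, hence width at most $r$; but since brick walls are subgraphs of grids and the treewidth/pathwidth of an $n$ by $r$ grid is $\min\{n,r\}$ (cited in the excerpt via \cite[Lemmas 87 and 88]{Bodlaender98}), and since in the intended regime $c\le r$ the bound we actually need is $c$, I would instead simply keep whole columns in the bags: let $\beta(p_j)$ consist of all vertices in columns $1$ through $j$ that still have a neighbour in a later column, together with all of column $j$. Because an edge only ever connects consecutive columns, a vertex of column $j'<j$ has no neighbour in column $j$ or later, so in fact $\beta(p_j)$ is exactly the set of vertices in column $j$, which has size $r$. Wait — that gives width $r-1$, which is even stronger than claimed when $r\le c+1$; in general one checks directly from the three tree-decomposition axioms that taking $\beta(p_j)=\{v_{i,j}:i\in[r]\}$ works: every vertex appears (in its own column's bag), every edge is covered (row-edges lie in one column's bag, column-edges lie in... ) — and here is the subtlety, a horizontal edge $v_{i,j}v_{i,j+1}$ is \emph{not} contained in a single column bag, so this oversimplified version fails and we genuinely need the intermediate nodes described above.

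So the actual decomposition is: the path $P$ has node set $\{p_1\}\cup\{q_{j,i} : j\in[c-1],\ i\in[r]\}$ arranged in the order $p_1, q_{1,1},\dots,q_{1,r}, q_{2,1},\dots,q_{2,r},\dots,q_{c-1,1},\dots,q_{c-1,r}$, with $p_1$ having bag equal to column $1$, with $q_{j,r}$ having bag equal to column $j+1$ (so this node plays the role of $p_{j+1}$ in the statement, and $q_{c-1,r}$ is the other endpoint $p_c$), and with $q_{j,i}$ for $i<r$ having bag $\{v_{i',j+1} : i'\le i\}\cup\{v_{i',j} : i'\ge i+1\}\cup\{v_{i+1,j+1}\}$ — i.e.\ the top $i{+}1$ vertices of column $j{+}1$ together with the bottom $r{-}i$ vertices of column $j$; wait, that set has size $r+1$, so I should instead have $q_{j,i}$ hold the top $i$ of column $j{+}1$ and the bottom $r{-}i{+}1$ of column $j$ when adding, alternating with ``drop'' nodes — in any case each bag has at most $r+1$ vertices, width at most $r$, which combined with $\pw(B_{r,c})\le c$ from the grid fact gives $\tw(B_{r,c})\le\pw(B_{r,c})\le\max$... . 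Let me streamline: I will present the column-sweep path decomposition in which consecutive ``full column'' bags $\beta=\text{col }j$ and $\beta=\text{col }(j+1)$ are connected through a chain of $r-1$ nodes whose bags interpolate by swapping one vertex at a time, so that every bag is a subset of $(\text{col }j)\cup(\text{col }j{+}1)$ of size at most $r+1$; I then verify the three axioms (coverage of vertices is immediate; coverage of vertical edges happens in a full-column bag, of horizontal edges $v_{i,j}v_{i,j+1}$ in the interpolation node where both are simultaneously present; connectivity because each vertex's bags form a contiguous run), giving width at most $r$. To also get the claimed bound $c$ and endpoint conditions: by \Cref{lemma:basics}(5) and the cited grid fact $\pw(\text{grid})=\min\{n,r\}$, we have $\pw(B_{r,c})\le c$ directly; and the existence of a width-$c$ path decomposition with the first and last bags equal to the first and last columns follows from the analogous statement for grids (the same column-sweep, performed on the $r\times c$ grid with $r\ge c$ after transposing so the short side has length $\le$ ...). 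The main obstacle, and the only place requiring care, is reconciling the two different bounds ($r-1$ from the naive column bags, which is wrong, versus $r$ from the interpolated sweep, versus $c$ from the grid-minor argument) and making sure the endpoint bags are exactly the first and last columns as the statement demands; I would handle this by giving the column-sweep decomposition for the orientation in which the number of columns is the smaller dimension (which is the regime in which the lemma is used later), so that ``$c$'' is simultaneously an upper bound on every bag size and the natural endpoint bags are the extreme columns, and cite \cite[Lemmas 87, 88]{Bodlaender98} plus \Cref{lemma:basics}(5) for the clean inequality $\tw(B_{r,c})\le\pw(B_{r,c})\le c$.
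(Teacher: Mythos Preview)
The paper does not actually prove this lemma: it is stated as Folklore, with the only justification being the sentence immediately preceding it (brick walls are subgraphs of grids, whose pathwidth is $\min\{r,c\}$, and ``the standard construction gives the following result''). Your column-sweep with one-vertex-at-a-time interpolation between consecutive full-column bags \emph{is} precisely that standard construction, so in substance you are doing exactly what the paper has in mind.

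Your confusion about $r$ versus $c$ is legitimate and is not a flaw in your reasoning: the column-sweep yields width $r$, not $c$, and the lemma's claim of a width-$c$ decomposition whose endpoint bags are full columns (each of size $r$) only makes sense when $r\le c+1$. The paper only ever applies the lemma with $r=3n$ rows and $c=24n$ columns, so this is harmless there, and your instinct to restrict to the regime where the number of columns is the larger dimension is exactly right. The only real criticism is presentational: the proposal reads as a live debugging session with several abandoned attempts (bags equal to single columns, bags equal to pairs of columns, etc.). In a clean writeup you should go straight to the interpolated column-sweep, verify the three path-decomposition axioms once, observe that every bag has size at most $r+1$, and note that the resulting width $r$ is at most $c$ in the regime of interest.
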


A \emph{linear ordering} of a graph $G$ is a bijection $f: V(G) \rightarrow \{1, \ldots, n\}$. The \emph{cutwidth} of a linear ordering of $G$ is
\[ \max_{i \in [n]} \Big| \big\{ \{v,w\} \in E(G) ~\big|~f(v) \leq i < f(w) \big\}\Big|.\]
The \emph{cutwidth} of a graph $G$, denoted by $\cw(G)$, is the minimum cutwidth of a linear ordering of $G$.

The \textsc{Cutwidth} problem asks to decide, for a given graph $G$ 
and integer $k$, whether the cutwidth of $G$ is at most $k$. Monien and Sudborough~\cite{MonienS}
showed that \textsc{Cutwidth} is NP-complete on graphs of maximum degree three
(using the problem name \textsc{Minimum Cut Linear Arrangement}). As their proof
does not generate vertices of degree one, and the cutwidth of a graph does not change
by subdividing an edge, from their proof, the next result follows.

\begin{theorem}[Monien and Sudborough~\cite{MonienS}]
\textsc{Cutwidth} is NP-complete on cubic graphs.
\end{theorem}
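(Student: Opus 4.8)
The plan is to derive this by re‑examining the Monien–Sudborough reduction that proves \textsc{Cutwidth} NP‑complete on graphs of maximum degree $3$. Two observations make this a short step. First, their construction, checked gadget by gadget, never creates a vertex of degree $0$ or $1$; hence the only obstruction to $3$‑regularity is the presence of vertices of degree exactly $2$ (and of parallel edges, if one insists on a simple graph). Second, subdividing an edge does not change the cutwidth, and consequently neither does its inverse, the \emph{suppression} of a degree‑$2$ vertex — deleting it and joining its two neighbours by an edge. So the bulk of the work is to verify that, in their construction, every degree‑$2$ vertex can be removed without disturbing the cutwidth, or while changing it by an amount the reduction can precompute.

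Concretely, given the graph $G$ produced by Monien and Sudborough (with $\Delta(G)\le 3$ and minimum degree $\ge 2$), suppress all degree‑$2$ vertices; since suppression preserves the cutwidth, this yields a $3$‑regular multigraph $M$ with $\cw(M)=\cw(G)$, apart from degenerate components that come only from components of $G$ of small cutwidth (for instance a cycle of $G$ becomes a double edge) — and, since $\cw(G)\ge 4$ may be assumed (bounded cutwidth is polynomial‑time testable), such components can be discarded and replaced by disjoint copies of $K_4$. It remains to make $M$ simple: replace each parallel pair between vertices $a,b$ by two new vertices $x,y$ and the edges $ax,xb,ay,yb,xy$, giving a simple $3$‑regular graph $G'$ in polynomial time. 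Using that cutwidth is subgraph‑monotone and subdivision‑invariant, deleting the new edges $xy$ exhibits a subdivision of $M$ inside $G'$, hence $\cw(M)\le\cw(G')$, while slotting $x,y$ next to $a$ in an optimal layout of $M$ bounds $\cw(G')$ from above in terms of $\cw(M)$.

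The step I expect to be the real obstacle is turning that last ``bounds from above'' into an exact relationship: a simple $3$‑regular graph has cutwidth at least $4$ whereas the parallel pair being replaced costs only $2$, so a crude count allows $\cw(G')$ to exceed $\cw(G)$ by a small amount, and a reduction needs the shift to be a precomputable value. This is settled either by a careful placement/accounting argument showing the increase equals a fixed constant $c$, so that $(G,k)\mapsto(G',k+c)$ is a correct reduction, or — more in the spirit of the present claim — by invoking the explicit Monien–Sudborough construction, in which the degree‑$2$ vertices occur in a controlled pattern; in either case, together with membership of \textsc{Cutwidth} in NP, this completes the proof. Everything else (cutwidth‑invariance of suppression, polynomial‑time recognition of bounded cutwidth, and the elementary structure of $M$) is routine.
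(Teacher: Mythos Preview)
Your approach is essentially the paper's. The paper's entire argument is the two-sentence remark preceding the theorem: Monien and Sudborough's construction produces no vertices of degree one, and cutwidth is invariant under subdivision (hence under suppression of degree-$2$ vertices); ``from their proof, the next result follows.'' That is precisely your first paragraph.

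Where you diverge is in trying to handle the suppression step in full generality, which forces you to worry about parallel edges and to introduce the $K_4$-minus-an-edge gadget --- and then, as you correctly diagnose, you cannot pin down the cutwidth shift exactly without going back to the specific Monien--Sudborough construction anyway. The paper simply does not take this detour: it appeals directly to ``their proof'' and leaves the parallel-edge issue implicit (the expectation being that the controlled pattern of degree-$2$ vertices in the actual construction avoids the problem). So your second route --- ``invoking the explicit Monien--Sudborough construction'' --- is the one the paper takes, and your gadget machinery, while not wrong in spirit, is superfluous here and, as you note yourself, does not close cleanly on its own.
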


\section{A simpler proof for co-bipartite graphs}
\label{section:co-bipartite}

In this section, we give a simple proof that
\textsc{Treewidth} is NP-complete. Our proof borrows elements
from the NP-completeness proof from Arnborg et al.~\cite{ArnborgCP}, but uses instead a very simple 
transformation from \textsc{Pathwidth}.

\newcommand{\TG}{F}
Let $G$ be a graph. 
We denote by $\TG(G)$ the graph obtained from~$G$ as follows.
The vertices of $\TG(G)$ consist of two copies $v$ and $v'$ for every $v \in V(G)$; we denote by $V$ and $V'$ the sets 
$V(G)$ and $\{v' \mid v \in V(G)\}$, respectively. 
Moreover, the graph $\TG(G)$ contains for every $v \in V(G)$ an edge between $v$ and $v'$, and for every edge $\{u,v\} \in E(G)$, it contains one edge between $u$ and $v'$ and one edge between $v$ and $u'$. 
Finally, $\TG(G)$ contains all edges between every pair of distinct vertices in $V$ and every pair of distinct vertices in $V'$.
Note that each of the sets $V$ and $V'$ are cliques in $\TG(G)$. 
In particular, $G$ is co-bipartite. An example is given in \Cref{figure:TG}.

\begin{figure}
    \centering
    \includegraphics[scale=1.3]{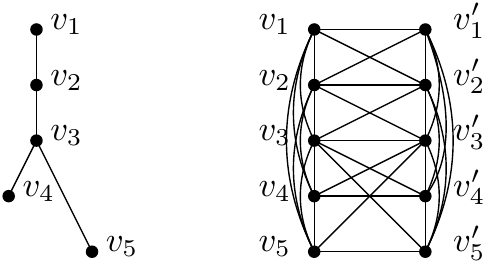}
    \caption{A graph $G$ with $\TG(G)$.}
    \label{figure:TG}
\end{figure}

\begin{lemma}\label{from-G-to-T(G)}
Let $G$ be a graph. Then, $\tw(\TG(G))= \pw(\TG(G)) = n+\pw(G)$, where $n=|V(G)|$.
\end{lemma}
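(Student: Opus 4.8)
The plan is to prove the two equalities separately. First I note that since $V$ and $V'$ are cliques in $F(G)$, the graph $F(G)$ is co-bipartite, so by \Cref{lemma:co-bipartite}(1) we have $\tw(F(G)) = \pw(F(G))$; thus it suffices to show $\pw(F(G)) = n + \pw(G)$, which I will do by proving the two inequalities.

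For the upper bound $\pw(F(G)) \le n + \pw(G)$, I would start from an optimal path decomposition $(P,\gamma)$ of $G$ of width $\pw(G)$, so every bag has at most $\pw(G)+1$ vertices of $V$. The idea is to build a path decomposition of $F(G)$ by adding, to each bag $\gamma(x)$, the entire set $V'$ of primed copies; the new bags have size at most $(\pw(G)+1) + n$, giving width $n + \pw(G)$. I then need to check the three tree-decomposition axioms: vertex coverage is clear since every $v \in V$ appears somewhere and every $v' \in V'$ appears everywhere; edge coverage holds because $V'$ is a clique entirely inside every bag (covering all $V'$-internal edges and all $v$–$v'$ edges via the bag containing $v$), the clique $V$ sits inside some bag of the original decomposition, and each cross edge $\{u, v'\}$ coming from an edge $\{u,v\}\in E(G)$ is covered by any bag of $(P,\gamma)$ containing both $u$ and $v$; connectivity holds because each $V'$-vertex occupies all bags (a subpath) and each $V$-vertex occupies the same connected subpath as in $(P,\gamma)$.

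For the lower bound $\pw(F(G)) \ge n + \pw(G)$, I would take an optimal path decomposition $(P,\beta)$ of $F(G)$, and invoke \Cref{lemma:co-bipartite}(2) (using that $F(G)$ is co-bipartite with clique partition $V \cup V'$) to assume that $V \subseteq \beta(p_1)$ and $V' \subseteq \beta(p_r)$ where $p_1, p_r$ are the endpoints of $P$. Restricting attention to the primed vertices, I would extract from $(P,\beta)$ a path decomposition of the "primed part" and argue that it induces a path decomposition of $G$; more precisely, I would use the intervals along $P$ occupied by the primed vertices. The key point: for each $v$, let $I_{v'}$ be the (closed) interval of nodes containing $v'$. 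Then $I_{v'}$'s right endpoint is $p_r$ (since $V' \subseteq \beta(p_r)$), and because the edge $\{v, v'\}$ must be covered, $v$ lies in some bag of $I_{v'}$; moreover for an edge $\{u,v\}$ of $G$, the edge $\{u, v'\}$ forces $u \in \beta(x)$ for some $x \in I_{v'}$. I would then define a path decomposition of $G$ by setting, for node $x$ on $P$, the bag $\gamma(x) = \{ v \in V(G) : v' \in \beta(x)\}$ — but this does not obviously cover vertices $v$ whose copy $v'$ sits far from $v$ in $\beta$. The cleaner route, which I would pursue, is: consider the map sending each node $x$ of $P$ to $\{v \in V(G) : \beta(x) \cap \{v, v'\} \neq \emptyset\}$, i.e.\ "project" both copies down to $v$; check this is a path decomposition of $G$ (coverage and edge coverage are immediate from the cross edges; connectivity needs the fact that $v$ and $v'$ together occupy a connected subpath, which follows since $v \in \beta(x)$ for some $x$ in $I_{v'}$ and $I_v$, $I_{v'}$ are each connected). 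Finally I count: at the node $p_r$, all of $V'$ is present, so $|\beta(p_r)| \ge n$; and generically a bag realizing large width in the projected decomposition of $G$ corresponds to a bag of $\beta$ containing that many distinct base-vertices in one copy plus all of $V'$ (or a careful accounting of how the $V$-clique and the $V'$-clique interact forces $|\beta(x)| \ge n + \pw(G) + 1$ somewhere). I would make this precise by arguing that if the projected decomposition of $G$ has width $w$, then some node $x$ has $w+1$ vertices $v$ with $\beta(x) \cap \{v,v'\} \neq \emptyset$; pushing the primed copies of all these down and using $V' \subseteq \beta(p_r)$ along an exchange/sliding argument, one shows $|\beta(x)| \ge n + w + 1$, hence $\pw(F(G)) \ge n + \pw(G)$.

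The main obstacle is the lower bound, specifically turning a path decomposition of $F(G)$ into one of $G$ while simultaneously accounting for the "$+n$" coming from the $V'$-clique and the "$+\pw(G)$" coming from the structure of $G$ — the two contributions must be shown to add rather than overlap, which is exactly where the co-bipartite normal form from \Cref{lemma:co-bipartite}(2) (forcing $V$ and $V'$ to the two ends) does the heavy lifting. I expect the bookkeeping to hinge on: for each $v$, the interval $I_{v'}$ ends at $p_r$, so $I_v \cup I_{v'}$ is connected and its union with the fact that all $V'$ is at $p_r$ lets me "slide" a witnessing bag for width in $G$ to a node where all of $V'$ is still present, yielding the additive bound.
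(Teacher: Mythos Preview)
Your upper bound argument contains a genuine error. You propose to take a width-$\pw(G)$ path decomposition of $G$ and add all of $V'$ to every bag. But then you claim that ``the clique $V$ sits inside some bag of the original decomposition'' --- this is false. The original decomposition is a path decomposition of $G$, and $V=V(G)$ is not a clique in $G$; there is no reason any bag contains all of $V$. Consequently your bags do not cover the edges of $F(G)$ that lie inside the clique $V$. The paper's construction fixes exactly this: for each $v$ it puts $v$ in every bag $\gamma(p_i)$ such that $v$ still appears in some $\beta(p_j)$ with $j\ge i$, and puts $v'$ in every bag $\gamma(p_i)$ such that $v$ already appeared in some $\beta(p_j)$ with $j\le i$. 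Then $V\subseteq\gamma(p_1)$ and $V'\subseteq\gamma(p_r)$, so both clique conditions are met, and a case analysis shows $|\gamma(p_i)|=|\beta(p_i)|+n$ exactly.

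Your lower bound is also not complete, and the projection you settle on does not work. You define the projected bag at $x$ to be $\{v: \{v,v'\}\cap\beta(x)\neq\emptyset\}$. But once you have normalized via \Cref{lemma:co-bipartite}(2) so that $V\subseteq\beta(p_1)$ and $V'\subseteq\beta(p_r)$, the interval of $v$ and the interval of $v'$ together cover all of $P$ (they meet at a bag containing the edge $\{v,v'\}$), so \emph{every} projected bag equals $V$. Your projection therefore yields only the trivial width-$(n-1)$ decomposition of $G$, and the ``sliding argument'' you allude to cannot recover $\pw(G)$ from it. The paper instead takes the \emph{intersection} projection $\delta(x)=\{v: v\in\beta(x)\text{ and }v'\in\beta(x)\}$, checks it is a path decomposition of $G$ (using that $\{v,v',w,w'\}$ is a clique in $F(G)$ whenever $\{v,w\}\in E(G)$), and then counts: since every bag contains at least one of $v,v'$ for each $v$, one gets $|\beta(x)|=|\delta(x)|+n$ for every node $x$, which is precisely the additive $+n$ you were looking for.
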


\begin{proof}
First, we show that $\pw(\TG(G))\leq n+\pw(G)$. 
Let $k = \pw(G)$.
Take a path decomposition $(P,\beta)$ of $G$ of width $k$, with $P= (p_1, \ldots, p_r)$.
Now, let $\gamma(p_i)$ be a set of vertices of $\TG(G)$ defined as follows:
\begin{itemize}
    \item For each $v\in V(G)$ such that there is a $j\geq i$ with $v\in \beta(p_j)$,
    add $v$ to $\gamma(p_i)$.
    \item For each $v\in V(G)$ such that there is a $j\leq i$ with $v\in \beta(p_j)$,
    add $v'$ to $\gamma(p_i)$.
\end{itemize}

An example of this construction, applied to the graphs $G$ and $\TG(G)$ of \Cref{figure:TG}, is given in \Cref{figure:pathdec1}.

\begin{figure}
    \centering
    \includegraphics{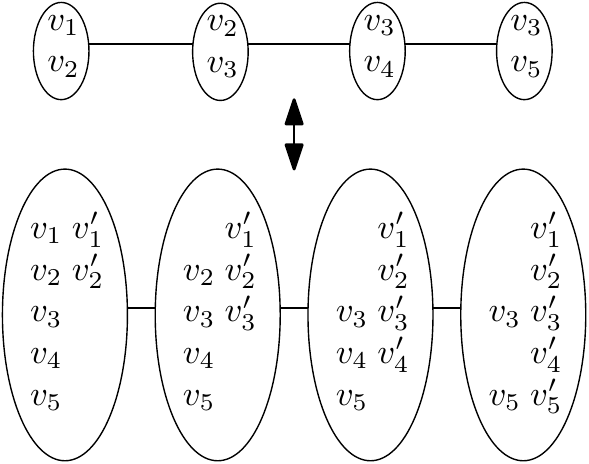}
    \caption{A path decomposition of {the graph $G$ from \Cref{figure:TG}} and the corresponding path decomposition of~$\TG(G)$.}
    \label{figure:pathdec1}
\end{figure}

We claim that $(P,\gamma)$ is a path decomposition of $\TG(G)$ of width $n+k$.
We first verify that  $(P,\gamma)$ is a path decomposition.
The first and third conditions of path decompositions are clearly satisfied. 
Notice that $V \subseteq \gamma(p_1)$, and $V'\subseteq \gamma(p_r)$. So, for each
edge in $\TG(G)$ between two vertices in 
$V$, or between two vertices in $V'$, there is a bag in  $(P,\gamma)$
containing the two endpoints of the edge, namely, the bag corresponding to the node $p_1$ or $p_r$, respectively. 
Consider an edge $\{v,v'\}$ for a vertex $v\in V(G)$. There is a node $p_v$ with
$v\in \beta(p_v)$, and therefore $v,v' \in \gamma(p_v)$.
Consider an edge $\{v,w'\}$ in $\TG(G)$, corresponding to an edge $\{v,w\}\in E(G)$. 
There is a node $p_{vw}$ with $v,w\in \beta(p_{vw})$. Now, $v, v', w, w' \in \gamma(p_{vw})$.

To see that the width is $n+k$, consider some bag $\gamma(p_i)$ and a vertex $v \in V(G)$. There are three possible cases:
\begin{enumerate}
    \item For each $j$ with $v \in \beta(p_j)$, $j>i$. Now, $v\in \gamma(p_i)$; 
    $v'\not\in \gamma(p_i)$.
    \item For each $j$ with $v \in \beta(p_j)$, $j<i$. Now, $v'\in \gamma(p_i)$; 
    $v\not\in \gamma(p_i)$.
    \item If the previous two cases do not hold, then there is $j\leq i$ with
    $v \in \beta(p_j)$, and $j'\geq i$ with $v\in \beta(p_{j'})$. From the
    definition of path decompositions, it follows that $v\in \beta(p_i)$. 
    From the construction of $\gamma$, we have $v,v'\in \gamma(p_i)$.
\end{enumerate}

In each of the cases, we have one vertex more in $\gamma(p_i)$ than in $\beta(p_i)$,
so for each node, the size of its $\gamma$-bag is exactly $n$ larger than the size
of its $\beta$-bag. The claim follows.

\medskip

Now, suppose the treewidth of $G$ equals $\ell$. 
From \Cref{lemma:co-bipartite}(2),
it follows that
we can assume we have a path decomposition $(P,\gamma)$ of $\TG(G)$ of
width $\ell$, with $P$ having successive bags $p_1, p_2, \ldots, p_r$,
and with $V \subseteq \gamma(p_1)$ and $V'\subseteq \gamma(p_2)$.

We now define a path decomposition $(P,\delta)$ of $G$, as follows.
For each node $x$ on~$P$, set $\delta(x) = \{v\in V ~|~ v\in \gamma(x) \wedge 
v'\in \gamma(x)\}$. (Note that this is the reverse of the operation in the first
part of the proof; compare with \Cref{figure:pathdec1}.)

We now verify that $(P,\delta)$ is indeed a path decomposition of $G$. For
each vertex $v$, $\{v,v'\}$ is an edge in $\TG(G)$, so there is a node $x_v$ with
$v,v'\in \gamma(x_v)$, hence $v\in \delta(x_v)$. For each edge $\{v,w\}\in E(G)$,
the set $\{v,v',w,w'\}$ forms a clique in $\TG(G)$, so there is a node
$x_{vw}$ with  $\{v,v',w,w'\} \subseteq \gamma(x_{vw})$ (see \Cref{lemma:basics}(1)).
Hence $v,w\in \delta(x_{vw})$. Finally, for each $v\in V(G)$, the set of
nodes $x$ with $v\in \delta(x)$ is the intersection of the nodes with
$v\in \gamma(x)$ and the nodes with $v'\in \gamma(x)$; the intersection of 
connected subtrees is connected, so the third condition in the definition of path (tree) decompositions also holds.

Finally, we show that the width of $(P,\delta)$ is $\ell-n$.
Consider a vertex $v$, and $i\in [r]$.
There must be $i_v$
with $\{v,v'\} \subseteq \gamma(p_{i_v})$. If $i\leq i_v$, then $v\in \gamma(p_i)$;
if $i\geq i_v$, then $v'\in \gamma(p_i)$ (using that $v\in \gamma(p_1)$ and
$v'\in \gamma(p_r)$). So, we have $\{v,v'\}\cap \gamma(p_i) \neq \emptyset$. 

Now, for each node $p_i$, $i\in [r]$, for each vertex $v$, we have
that $\gamma(p_i)$ contains both vertices from the set 
$\{v,v'\}$
when $v\in \delta(p_i)$, and $\gamma(p_i)$ contains exactly one vertex from the set $\{v,v'\}$ 
when $v\not\in \delta(p_i)$. So, $|\gamma(p_i)| = | \delta(p_i)|+n$. As this holds
for each bag, we have that the width of $(P,\gamma)$ is exactly $n$ larger than the
width of $(P,\delta)$. It follows that $\pw(G) \leq \tw(\TG(G))-n \leq \pw(\TG(G))-n$, which shows the result.
\end{proof}

\Cref{from-G-to-T(G)}, together with the NP-completeness of \textsc{Vertex Separation Number}~\cite{MonienS}, and the equivalence between the pathwidth and the vertex separation number (\Cref{pw-equals-vsn}), leads to an alternative and simpler
proof of NP-completeness of \textsc{Treewidth} in the class of co-bipartite graphs.

\begin{corollary}
\textsc{Treewidth} is NP-complete on co-bipartite graphs.
\end{corollary}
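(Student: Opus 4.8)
The plan is to combine \Cref{from-G-to-T(G)} with the hardness of the pathwidth-type problems quoted earlier in this section. The reduction will be the map $G \mapsto (\TG(G), k+n)$, where $n = |V(G)|$; by \Cref{from-G-to-T(G)} we have $\tw(\TG(G)) = n + \pw(G)$, so $\tw(\TG(G)) \le k+n$ if and only if $\pw(G) \le k$. Hence this is a correct many-one reduction from \textsc{Pathwidth} to \textsc{Treewidth restricted to co-bipartite graphs}, once we observe that $\TG(G)$ is co-bipartite (its vertex set splits into the two cliques $V$ and $V'$, as noted right after the definition of $\TG$).

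First I would argue membership in NP: a tree decomposition of width at most $k$ is a polynomial-size certificate, and checking the three defining conditions together with the width bound is clearly polynomial-time, so \textsc{Treewidth} is in NP (and this restricts to the co-bipartite case). Next I would establish NP-hardness. As the base problem I would take \textsc{Pathwidth}, which is NP-complete: this follows from the NP-completeness of \textsc{Vertex Separation Number} due to Monien and Sudborough~\cite{MonienS} together with Kinnersley's identity $\pw(G) = \vsn(G)$ (\Cref{pw-equals-vsn}); alternatively one may invoke the already-stated theorem that \textsc{Pathwidth} is NP-complete even on subcubic planar graphs. Then I would spell out that $\TG(G)$ can be built from $G$ in polynomial time — it has $2n$ vertices and $O(n^2)$ edges — and is co-bipartite by construction. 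Finally, the correctness of the reduction is exactly the equivalence $\pw(G)\le k \iff \tw(\TG(G)) \le n+k$, which is immediate from \Cref{from-G-to-T(G)}.

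There is essentially no obstacle here: the only mild care needed is to make sure the base problem is genuinely NP-complete in a form we can cite (handled by \Cref{pw-equals-vsn} plus \cite{MonienS}), and to state clearly that $\TG(G)$ is co-bipartite so that the hardness lands in the promised graph class. The heavy lifting — proving $\tw(\TG(G)) = n + \pw(G)$ — has already been done in \Cref{from-G-to-T(G)}, so the corollary is a one-paragraph wrap-up.

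\begin{proof}
\textsc{Treewidth} is in NP, since a tree decomposition of width at most $k$ has size polynomial in $n$ and its validity can be checked in polynomial time. For hardness, recall that \textsc{Pathwidth} is NP-complete: this follows from the NP-completeness of \textsc{Vertex Separation Number}~\cite{MonienS} and \Cref{pw-equals-vsn}. Given an instance $(G,k)$ of \textsc{Pathwidth} with $n = |V(G)|$, we construct in polynomial time the graph $\TG(G)$, which has $2n$ vertices, $O(n^2)$ edges, and is co-bipartite, with $V$ and $V'$ being the two cliques. By \Cref{from-G-to-T(G)}, $\tw(\TG(G)) = n + \pw(G)$, so $\pw(G) \le k$ if and only if $\tw(\TG(G)) \le n+k$. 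Hence $(G,k) \mapsto (\TG(G), n+k)$ is a polynomial-time many-one reduction from \textsc{Pathwidth} to \textsc{Treewidth} on co-bipartite graphs, which establishes NP-hardness.
\end{proof}
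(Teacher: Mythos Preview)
Your proof is correct and follows exactly the approach the paper takes: combine \Cref{from-G-to-T(G)} with the NP-completeness of \textsc{Vertex Separation Number}~\cite{MonienS} and \Cref{pw-equals-vsn}, noting that $\TG(G)$ is co-bipartite. Your write-up simply makes explicit the NP-membership and polynomial-time-construction details that the paper leaves implicit.
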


One can obtain a proof of the NP-completeness of 
\textsc{Treewidth} on graphs with maximum degree five
by combining the proof above with the technique of
replacing a~grid with a~brick wall or grid (as in \cite{BodlaenderT} or in the next section).
Instead of this, we give in the next section a proof
that reduces from \textsc{Cutwidth} and shows NP-completeness of \textsc{Treewidth} on graphs of degree three.

\section{Cubic graphs}
\label{section:main}
In this section, we give an NP-completeness proof
for \textsc{Treewidth} on cubic graphs. The 
construction uses a few steps. The first step
is a simplified version of the NP-completeness 
proof from Arnborg et al.~\cite{ArnborgCP}; the
second step follows the idea of Bodlaender and
Thilikos~\cite{BodlaenderT} to replace the cliques by grids or
brick walls. After this step, we have a graph
with maximum degree 7. In the third step, we
replace vertices of degree more than 3 by trees
of maximum degree 3, and show that this step does
not change the treewidth (it actually can change
the pathwidth). The fourth step makes the graph 
3-regular by simply contracting over vertices of degree 2.

\begin{theorem}\label{thm:cubic}
\textsc{Treewidth} is NP-complete on regular graphs of degree 3.
\end{theorem}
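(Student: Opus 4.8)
The plan is to reduce from \textsc{Cutwidth} on cubic graphs, which is NP-complete by the result of Monien and Sudborough. Given a cubic graph $G$ with $m$ edges and a target cutwidth bound $k$, the first step is to build an intermediate graph $H_1$ that encodes a linear ordering of $V(G)$: introduce two cliques $A$ and $B$, each of size roughly $n$, plus gadget vertices for the edges of $G$, and wire things so that a tree decomposition of small width must (by \Cref{lemma:basics}(1) and the co-bipartite structure, via \Cref{lemma:co-bipartite}) lay $A$ on one end of a path and $B$ on the other, with the vertices of $G$ ``sweeping across'' in some order; the bags in the middle then measure exactly the number of edges of $G$ crossing the corresponding cut, so $\tw(H_1) = \cw(G) + c$ for an explicit constant $c$ depending only on $n$. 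This is the simplified Arnborg--Corneil--Proskurowski step, and the argument mirrors the proof of \Cref{from-G-to-T(G)}: one direction turns a good linear ordering into a path decomposition by the ``growing $V$, shrinking $V'$'' trick, and the other extracts a linear ordering from a width-$(\cw(G)+c)$ tree decomposition by reading off the order in which the sweep happens.

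The second step replaces each of the two large cliques $A$ and $B$ by a brick wall (or grid) of appropriate dimensions, following Bodlaender and Thilikos. By \Cref{Lemma:pathdecompositionbrickwall} a brick wall with $c$ columns has pathwidth at most $c$ and admits a path decomposition whose two end bags are the first and last columns, so it behaves like a clique ``from the outside'' for the purposes of attaching the edge gadgets, while internally having maximum degree $3$. One must choose the number of rows and columns so that the brick wall's own treewidth does not dominate, and verify (using \Cref{lemma:basics}(5) for the lower bound, since contracting rows of the brick wall yields back a clique minor) that the treewidth of the resulting graph $H_2$ is still $\cw(G) + c'$ for a suitable constant. After this step every vertex has degree at most $7$: the brick-wall vertices have degree at most $3$ or $4$, and the gadget vertices carry the bounded contribution coming from $G$ being cubic.

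The third step is the one I expect to be the main obstacle: reducing the maximum degree from $7$ to $3$ without changing the treewidth. The idea is to replace each high-degree vertex $v$ by a small tree $T_v$ of maximum degree $3$ whose leaves take over the incident edges of $v$. The easy direction is that this operation can only decrease treewidth, since the original graph is a minor of the new one (contract $T_v$ back to a point), so \Cref{lemma:basics}(5) applies. The delicate direction is that treewidth does not drop: one needs to argue that any tree decomposition of the degree-$3$ graph can be converted back to one of $H_2$ of the same width, essentially because a tree $T_v$ appears ``as a clique of size at most $2$ at a time'' and its branches can be routed along the host tree decomposition; here one has to be careful that this is true for \emph{treewidth} even though the analogous statement fails for pathwidth (which is why the reduction targets treewidth rather than pathwidth, and why a direct reduction from pathwidth would not suffice). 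This is where the ``well-known facts and simple insights'' promised in the introduction do the real work, and it is worth isolating as a separate lemma: replacing a vertex by a subdivided-star-like tree of max degree $3$ preserves treewidth.

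The fourth step is cosmetic: the graph $H_3$ now has maximum degree $3$ but may have vertices of degree $1$ or $2$. Vertices of degree $2$ are removed by contracting one of their incident edges (equivalently, suppressing them), which does not change the treewidth of a graph without parallel edges, exactly as noted for subdivision in the introduction; vertices of degree $1$ can be eliminated by attaching pendant gadgets or by ensuring the earlier construction never creates them (one can pad the brick walls and gadgets so all degrees are exactly $3$). Tracking the additive constant through all four steps gives a polynomial-time computable $k'$ with $\tw(H_{\mathrm{final}}) \le k'$ if and only if $\cw(G) \le k$, and since $H_{\mathrm{final}}$ is cubic and the construction is clearly polynomial, and \textsc{Treewidth} is in NP (guess a tree decomposition), this establishes NP-completeness on $3$-regular graphs.
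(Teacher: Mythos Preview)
Your four-step outline matches the paper's, and Steps 1, 2, and 4 are essentially right (though in Step~1 the paper uses \emph{three} copies of each vertex for $A$ and \emph{two} copies of each edge for $B$, so $|A|=|B|=3n$; there are no separate ``gadget vertices''). The genuine problem is in Step~3, where you have the two directions reversed.

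Replacing a vertex by a tree makes the \emph{original} graph $H_2$ a minor of the \emph{new} graph $H_3$ (contract each $T_v$), so \Cref{lemma:basics}(5) gives $\tw(H_2)\le \tw(H_3)$: the operation can only \emph{increase} treewidth, not decrease it. Consequently the minor argument is what guarantees treewidth does not drop, and your proposed ``delicate'' task---converting a tree decomposition of $H_3$ into one of $H_2$---establishes exactly the same (easy) inequality a second time. The direction you never address is $\tw(H_3)\le \tw(H_2)$: one must take an optimal decomposition of $H_2$ and produce one of $H_3$ of the same width. Your sketch (``$T_v$ appears as a clique of size at most~2 at a time and its branches can be routed'') is aimed at the wrong graph and does not do this.

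The paper does \emph{not} invoke a general ``vertex-to-tree preserves treewidth'' lemma here; instead it exploits the specific construction. For every $v\in V(G)$, the nine vertices $v^1,v^2,v^3$ and $e^1,e^2$ for the three incident edges $e$ form a clique in $G_2+{\it clique}(A)+{\it clique}(B)$, so by \Cref{lemma:basics}(1) some bag $p_{i_v}$ of the optimal path decomposition of $G_2$ contains all nine. One then attaches a \emph{new leaf bag} at $p_{i_v}$ containing all eleven vertices of each of the three $A$-trees together with the six ``$v$-side'' vertices of each of the six $B$-trees (this is why the $B$-trees are given a specific shape grouping leaves by endpoint). That new bag has size $3\cdot 11 + 3\cdot 2\cdot 6 = 69$, so the construction only works once the width is at least $68$, which is why the paper assumes $n\ge 22$. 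Without this structure-specific argument (and the tailored $B$-tree shape), Step~3 is a gap.
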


\begin{proof}
We use a transformation from \textsc{Cutwidth} on 3-regular graphs.

Let $G$ be an $n$-vertex $3$-regular graph and $k$ an integer.
Using a sequence of intermediate steps and intermediate graphs $G_1$, $G_2$, $G_3$, 
we construct a 3-regular graph $G_4$ with the property that $G$ has cutwidth at most $k$, if and only if $G_4$ has treewidth at most $3n+k+2$.

\paragraph{Step 1: From Cutwidth to Treewidth}
The first step is a streamlined version of the proof from Arnborg et al.~\cite{ArnborgCP}. For each vertex $v\in V(G)$, we take a set $A_v = \{v^1, v^2, v^3\}$ which has three copies of $v$.

For each edge $e\in E(G)$, we have a set $B_e = \{e^1,e^2\}$, which consists of two vertices that represent the edge.

Let $A= \bigcup_{v\in V(G)}A_v$, and $B=\bigcup_{e\in E(G)} B_e$.
We create $G_1$ by taking $A\cup B$ as vertex set, turning $A$ into a clique, 
turning $B$ into a clique, and for each pair $v$, 
$e$ with $v$ an endpoint of $e$, adding edges between all vertices in $A_v$ and
all vertices in $B_e$.

\begin{claim}
    Let $G$ and $G_1$ be as above. $\tw(G_1)=\pw(G_1)=
    \cw(G)+3n+2$.
    \label{claim:g1}
\end{claim}

\begin{proof}
First, assume $G$ has cutwidth $k$, and let
$f$ be a linear ordering of $G$ of
cutwidth~$k$, and denote the $i$th vertex in
the linear ordering as $v_i=f^{-1}(i)$.

Build a path decomposition $(P,\beta)$ 
with $P$ the path with nodes $p_1$, \ldots, $p_n$.
For $i\in [n]$, set 
\begin{equation*}
\begin{split}
\beta(p_i) = & \left\{ v_j^a ~\big|~ j\geq i  \wedge a\in
\{1,2,3\}\right\} \\ 
& \cup \left\{ e^b ~\big|~ e=\{v_{j},v_{j'}\}\in E(G) \wedge \min\{j,j'\} \leq i \wedge b\in [2] \right\}.
\end{split}
\end{equation*}
That is, we take the representatives of the vertices
$v_i, v_{i+1}, \ldots, v_n$, and all vertices
that represent an edge with at least one endpoint
in $\{v_1, v_2, \ldots, v_i\}$.

We can verify that $(P,\beta)$ is a path decomposition of $G_1$.
From the construction, it directly follows that $A \subseteq \beta(p_1)$
and 
$B \subseteq \beta(p_n)$. For the second condition of path decompositions, it remains to look at edges in $G_1$
with one vertex of the form $v_i^a$ and one 
vertex of the form $e^b$. Necessarily, $v_i$ is
an endpoint of $e$, and now we can note that both vertices
are in bag $\beta(p_i)$.
From the construction, it directly
follows that the third condition of path decompositions
is fulfilled. 

To show that the width of this path decomposition
is at most $k+3n+2$, we use an accounting system.
Consider $\beta(p_i)$. Give each vertex $v\in V(G)$
three credits, except $v_i$, which gets six credits.
Each edge that `crosses the cut', i.e.\  it belongs to the set
$\{\{v,w\}\in E(G) ~|~ f(v) \leq i < f(w)\}$, gets one credit. All other edges get no credit.
We handed out at most $k+3n+3$ credits.
{We now redistribute these credits to the vertices in $\beta(p_i)$.}
Each vertex $v_j$, $j\geq i$, gives one credit
to each vertex of the form $v_j^a$, $a\in \{1,2,3\}$.
For an edge $e=\{v_j,v_{j'}\}$, with $j< i$ and $j'<i$,
the vertices $e^1$ and $e^2$ get, respectively,
a credit from $v_j$ and $v_{j'}$.
For an edge $e=\{v_j,v_{j'}\}$, with $j\leq i < j'$,
the vertices $e^1$ and $e^2$ get, respectively,
a credit from $v_j$ and a credit from $e$.
Now, each vertex and edge precisely spends its
credit: a vertex $v_j$ with $j<i$ gives one credit
to each of its incident edges, $v_i$ gives one credit
to each of its copies $v_i^1$, $v_i^2$, $v_i^3$, and one credit to each of its incident edges,
and $v_j$ with $j>i$ gives one credit to each
of its copies $v_j^1$, $v_j^2$, $v_j^3$.
Each vertex in the bag $\beta(p_i)$ gets one
credit, so the size of the bag is at most
$k+3n+3$. As this holds for each bag, the width
of the path decomposition is at most $k+3n+2$.

\medskip

Now, assume that we have a tree decomposition
$(T,\gamma)$ of $G_1$ of width $\ell$. By \Cref{lemma:basics}(1),
as $A$ and $B$ are cliques,
there is a bag $p_1$ with $A\subseteq \gamma(p_1)$,
and a bag $p_r$ with $B \subseteq \gamma(p_r)$.
{As in the proof of \Cref{lemma:co-bipartite},}
we can remove all bags not on the path from $p_1$ and $p_r$,
and still keep a tree decomposition of $G_1$. 
So, we can assume we have a path decomposition $(P,\gamma)$ of width {at most} $\ell$ of $G_1$, where $P$ is a path with successive vertices $p_1, p_2, \ldots, p_r$, and $\gamma(p_1)=A$ and
$\gamma(p_r) = B$.

For each $v\in V(G)$, set $g(v)$ to the maximum $i$ such
that $\{v^1,v^2,v^3\} \subseteq \beta(p_i)$.
(As $\{v^1,v^2,v^3\} \subseteq A \subseteq \beta(p_1)$, $g(v)$ is well defined
and in $[r]$.)

Take a linear ordering $f$ of $G$ such that for all $v,w\in V(G)$, $g(v)<g(w) \Rightarrow f(v)<f(w)$. (That is, order the vertices with respect to increasing values of $g$, and
arbitrarily break the ties when vertices have the same value $g(v)$.) We claim that
$f$ has cutwidth at most $\ell - 3n - 2$.

Consider a vertex $v\in V(G)$, and suppose $g(v)=i'$. Let $e$ be an edge incident to~$v$.
The set $\{v^1, v^2, v^3, e^1, e^2\}$ is a clique in $G_1$, so there is an $i_e$ with
$\{v^1, v^2, v^3, e^1, e^2\} \subseteq \beta(p_{i_e})$. From the definition of path
decompositions and the construction of $g$, we have $i_e\leq i'$. 
As $\{e^1, e^2\}\subseteq \beta(p_{i_e})\cap \beta(p_r)$, we have that $\{e^1, e^2\}\subseteq \beta(p_{i'})$.

Now, consider an $i\in [n]$. Let $v= f^{-1}(i)$ be the $i$th vertex of the ordering and $C = f^{-1}[i]$ be the first $i$ vertices in the linear ordering.
Let $E^1$ be the set of edges with exactly one endpoint in $C$, and let $E^2$ be the set
of edges with both endpoints in $C$.
Suppose $g(v)=i'$.
We now examine which vertices belong to $\beta(p_{i'})$:
\begin{itemize}
    \item By definition, $v^1$, $v^2$, $v^3$.
    \item For each $w\in V(G)\setminus C$, there is an $i_w\geq i'$ with 
    $\{w^1,w^2,w^3\}\subseteq \beta(p_{i_w})$, hence $w^1$, $w^2$, and $w^3$ are in
    $\beta(p_{i'})$. (Use here that these vertices are in $\beta(p_1)$.)
    The number of such vertices is $3n-3i$.
    \item For each edge $e\in E^1 \cup E^2$, from the discussion above
    it follows that there is an $i_e \leq i'$ with $e^1, e^2 \in \beta(p_{i_e})$,
    and, as these vertices are in $\beta(p_r)$, we have $\{e^1,e^2\} \subseteq \beta(p_{i'})$. 
\end{itemize}
Thus, the size of $\beta(p_{i'})$ is at least $3n-3i+3 + 2\cdot|E_1|+2\cdot|E_2|$.
As each vertex in~$C$ is incident to exactly three edges, we have $3i = |E_1| + 2 \cdot |E_2|$.
Now, $\ell \geq |\beta(p_{i'})| -1 \geq 3n-3i+2 + 2\cdot|E_1|+2\cdot|E_2| = 3n+2 + |E_1|$.
It follows that the size of the cut $\Big|\big\{\{x,y\}\in E(G) ~\big|~ f(x)\leq i < f(y)\big\}\Big| 
= |E_1| \leq \ell -3n -2$. As this holds for each $i\in [n]$, the bound of $\ell-3n-2$
on the cutwidth of $f$ follows.

\medskip
{We have thus shown that $\pw(G_1)\le
    \cw(G)+3n+2$ and that $\cw(G_1)\le \tw(G_1)-3n-2$.
    Together with the inequality $\tw(G_1)\le \pw(G_1)$, this proves the claim.}
\end{proof}

\paragraph{Step 2: The brick wall construction}
In the second step, we use a technique from Bodlaender and Thilikos~\cite{BodlaenderT}.
We construct a graph $G_2$ from the graph $G_1$ by removing the edges between vertices in $A$ and the edges between vertices in $B$; then, we add a brick wall with $3n$ rows
and $24n$ columns, and add a matching from the vertices in the last column of the brick wall to the vertices in $A$. 
Similarly, we add another brick wall with $3n$ rows and $24n$ columns, and add a matching from the vertices in the first column of this brick wall to the vertices in $B$.

As applying the brick wall construction to a graph obtained 
from the first step would be unwieldy, the 
example in \Cref{figure:brickwallconstruction} shows
the brick wall construction applied to the graph from
the previous section.

\begin{figure}
    \centering
    \includegraphics{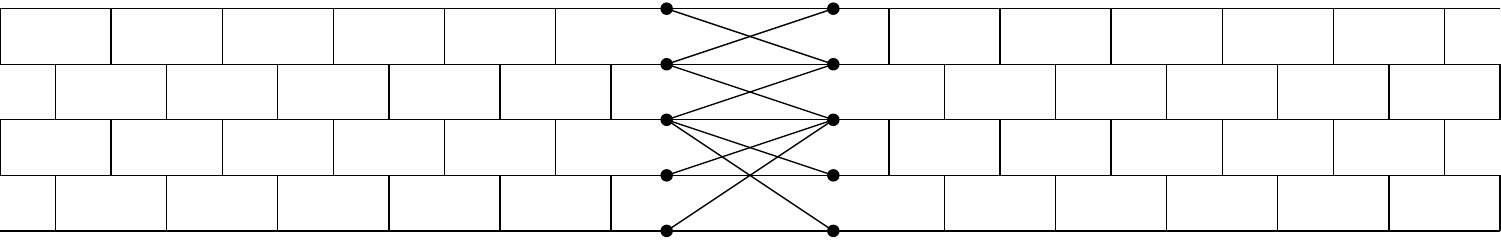}
    \caption{Illustration of the the brick wall construction.
Here, it is applied to the graphs from \Cref{figure:TG},
and the number of columns shown is smaller than
that in the actual construction.}
    \label{figure:brickwallconstruction}
\end{figure}

\begin{claim}\label{claim:tw:G1:G2}
    $\tw(G_1)=\pw(G_1)=\tw(G_2)=\pw(G_2)$. Moreover, there is a path decomposi\-tion
    of $G_2$ of optimal width with a node $x_A$ with $A\subseteq \beta(x_A)$ and
    a node $x_B$ with $B\subseteq \beta(x_B)$.
\end{claim}

\begin{proof}
Suppose we have a tree decomposition $(T,\beta)$ of $G_2$ of {optimal} width $k$.
By \Cref{lemma:basics}(3), there is a node $x$ such that each connected component
of $G_2 \setminus \beta(x)$ contains at most $36n^2$ vertices of the left brick wall.
Note that $\beta(x)$ must contain a vertex of each row from the left brick wall.
{Suppose not. Each pair of two successive columns in the brick wall is connected;
there are at least $12n- |\beta(x)|$ disjoint pairs of columns which do not contain a vertex from $\beta(x)$. 
All vertices on these columns are connected in $G_2 \setminus \beta(x)$ as they intersect the row without vertices in $\beta(x)$. 
As the number of vertices in these columns is larger than $36n^2$, since $k \leq |E(G)| = 3n/2$, we have a contradiction.
}

By \Cref{lemma:basics}(2), $(T,\beta)$ is also a tree decomposition of the graph
obtained from $G_2$ by adding edges between each pair of vertices in $\beta(x)$.
Apply the same step to the right brick wall. 
We see that $(T,\beta)$ is a tree decomposition of width $k$ of a graph that for each pair of rows in the left brick wall contains an edge between a pair of vertices from these rows, and similarly for the right brick wall. 
Now, if we contract each row of the left brick wall to the neighbouring vertex in $A$, and contract each row of the right brick wall to the neighbouring vertex in $B$, we obtain $G_1$ as minor: $G_1$ is a minor of a graph of treewidth $k$, so has
treewidth at most $k$.

\medskip

By \Cref{lemma:co-bipartite}, $\tw(G_1)=\pw(G_1)$, and there is a
path decomposition $(P,\gamma)$ of $G_1$ of {optimal} width $\ell$ such that $A\subseteq \gamma(p_1)$ and $B\subseteq \gamma(p_q)$, where $p_1$ and $p_q$ are the endpoints~of~$P$.

We can now build a path decomposition of $G_2$ {of the same width $\ell$} as follows: first, take the
successive bags of a path decomposition of the left brick wall, of width $3n$, where
we can end with a bag that contains all vertices of $A$. Then, we take the bags
of $(P,\gamma)$. Now, we add a path decomposition of the right brick wall, of width $3n$, that starts with a bag containing all vertices in $B$.
\end{proof}

\paragraph{Step 3: Making the graph subcubic}
Note that the maximum degree of a vertex in $G_2$ is seven. A vertex in $A$ has
one neighbour in the brick wall, and six neighbours in $B$ (the vertex it represents
has three incident edges,
and each is represented by two vertices). Similarly, a vertex in $B$ has degree
seven: again, one neighbour in the brick wall,
and six neighbours in $A$ (each endpoint of
the edge it represents is represented by three vertices). Vertices in the brick walls have
degree {at most} three.

Given $G_2$, we build a subcubic graph $G_3$.
We do this by replacing each vertex in $A$ and in $B$ by a tree, and replacing edges to vertices in $A$ and $B$ by edges to leaves or the root of these trees.

For vertices $v^\alpha$ in $A$ (with $v\in V(G)$, $\alpha\in [3]$), we take an arbitrary tree with a root of degree 2, all
other internal vertices of degree 3, and six leaves.
The root (which we denote by the name of the original vertex $v^\alpha$) is made adjacent to the neighbour of $v^\alpha$ in the brick wall.

Each vertex $e^\alpha \in B$ (with $e \in E(G)$, $\alpha\in [2]$) is also replaced
by a tree with a root of degree 2, all other internal vertices of degree 3, and six
leaves, but here we need to use a specific shape of the tree. 
Suppose $e$ has  endpoints $v$ and $w$. \Cref{figure:constructionB} shows this tree. In
particular, note that the root is made adjacent to the neighbour of $e^\alpha$ in the brick
wall, and the leaves that go to the subtrees that represent $v$ are grouped together,
and the leaves that go to the subtrees that represent $w$ are grouped together.

\begin{figure}
    \centering
    \includegraphics{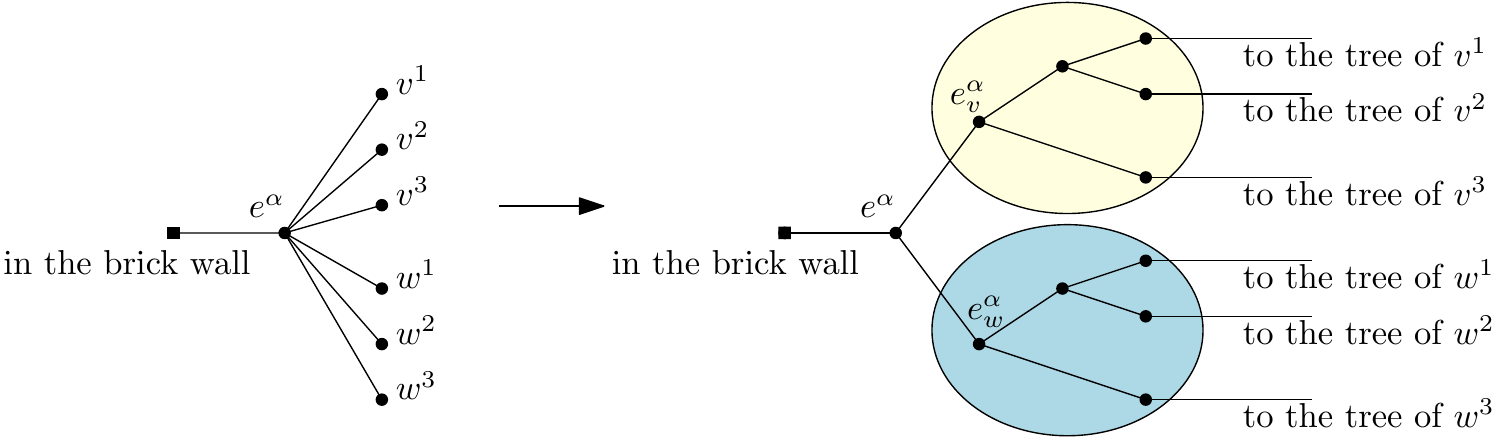}
    \caption{Replacing a vertex $e^\alpha$ from $B$ by a tree; $e$ is here the edge $\{v,w\}$.}
    \label{figure:constructionB}
\end{figure}

Each edge between a vertex $v^\alpha$ in $A$ and a vertex  $e^{\alpha'}$in $B$ now becomes an edge from a
leaf of the tree representing $v^\alpha$, to a leaf of the tree representing $e^{\alpha'}$; $\alpha\in [3]$, $\alpha'\in [2]$.
The roots of the trees are made adjacent to a vertex in the brick wall; this is the
same vertex as the brick wall neighbour of the original vertex in $G_2$.

\begin{claim}
    Suppose $\tw(G_2) \geq 68$. Then
    $\tw(G_2)=\pw(G_2)=\tw(G_3)$.
\end{claim}

\begin{proof}
    We have already established that $\tw(G_2)=\pw(G_2)$.

    First, note that $G_2$ is a minor of $G_3$: we obtain $G_2$ from $G_3$
    by contracting each of the new trees to its original vertex. By \Cref{lemma:basics}(5), we have $\tw(G_2) \leq \tw(G_3)$.

    \medskip

    Suppose we have a path decomposition $(P,\beta)$ of $G_2$ of optimal
    width $\ell = \pw(G_2)=\tw(G_2)$. 
    {By \Cref{claim:tw:G1:G2}}, we can also assume that there is a bag that contains all vertices in $A$, and that there is a bag that
    contains all vertices in $B$.

    For each vertex $v\in V(G)$, we claim that there is a node $p_{i_v}$ with
    $v^1, v^2, v^3 \in \beta(p_{i_v})$ and $e^1, e^2 \in \beta(p_{i_v})$ for each edge $e$ incident to $v$. 
    This can be shown as follows. 
    The pair $(P,\beta)$ is also a path decomposition of the graph $G+{\it clique}(A)+{\it clique}(B)$, obtained from $G_2$ by adding edges between each pair of vertices in $A$, and each pair of
    vertices in $B$ (since there is a bag containing all vertices of $A$ and a bag containing all vertices of $B$ and by
    \Cref{lemma:basics}(2).) The claim now follows from \Cref{lemma:basics}(1)
    by observing that these nine vertices ($v^1, v^2, v^3$, and $e^1$, $e^2$ for each
    edge incident to $v$) form a clique in $G+{\it clique}(A)+{\it clique}(B)$.

   Now, we can construct a tree decomposition of $G_3$ as follows.
    Take $(P,\beta)$. Replace each vertex in $A$ and each vertex in $B$ by the root
    of the tree it represents. For each vertex $v\in V(G)$, we add one additional
    bag to the tree decomposition; this bag becomes a leaf of the tree decomposition.
    (Note that after this step, we no longer have a path decomposition.)

    Consider a vertex $v\in V(G)$. Take a new node $x_v$, and make $x_v$ adjacent
    to $p_{i_v}$ in the tree. Let the bag of $x_v$ contain the following vertices:
    all vertices in the subtrees that represent $v^1$, $v^2$, $v^3$,
    for each edge $e$ with $v$ as endpoint the vertices $e^1$, $e^1_v$, $e^2$,
    $e^2_v$, and the descendants of $e^1_v$ and $e^2_v$ in the respective subtrees
    (the vertices in the yellow area in \Cref{figure:constructionB}, 
    assuming that $e = \{v,w\}$).

    \begin{figure}
        \centering
        \includegraphics[width=\textwidth]{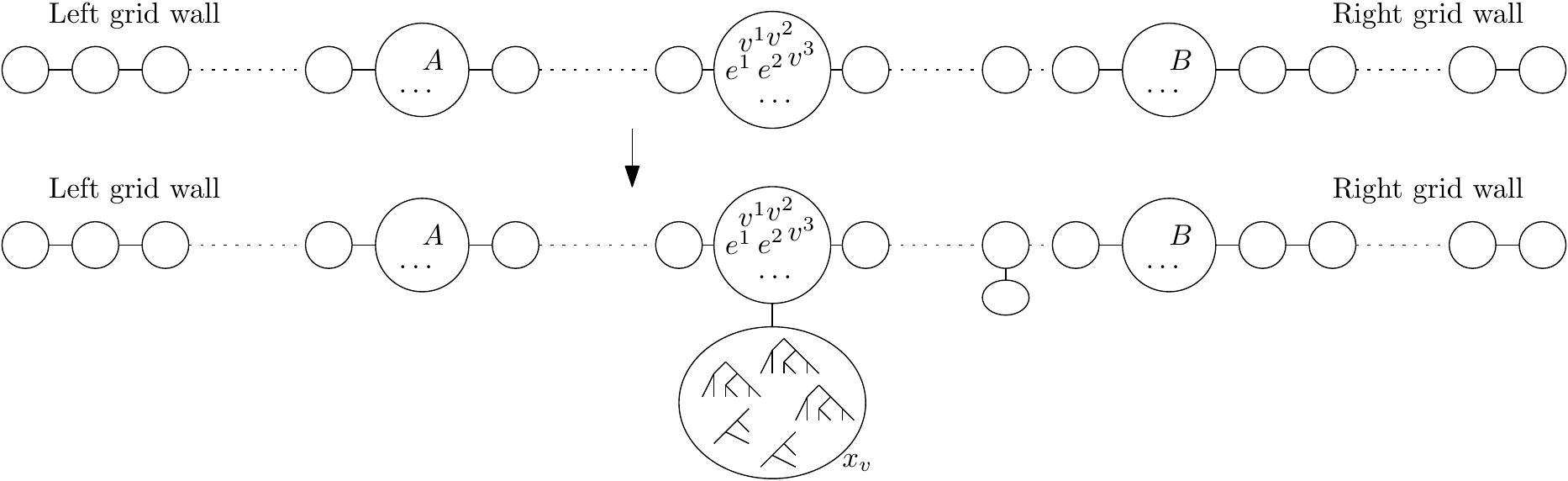}
        \caption{Illustration of the proof. The path decomposition before and after adding the new node $x_v$.}
        \label{figure:step3}
    \end{figure}

    Each vertex in $A$ is represented by a binary tree with a root of degree two
    and six leaves, so by eleven vertices. For each of the three edges incident to $v$, we
    have two subtrees of which we take six vertices each, so the total size
    of this new bag is $3\cdot 11+ 3\cdot 2 \cdot 6 = 69$.
    One easily verifies that we have a tree decomposition of $G_3$, and as the
    original bags keep the same size when $\ell \geq 68$, we have a tree decomposition of $G$ of width at most $\ell$.
\end{proof}

By taking a sufficiently large $n$ (e.g.\ $n\geq 22$ works), we can assume
that $\ell \geq 68$.

\paragraph{Step 4: Making the graph 3-regular}
The fourth step is simple. Note that when the treewidth of a graph is at least three, the treewidth does not change when we contract a vertex of degree at most two to a neighbour (see \cite{ArnborgProskurowski86}), possibly removing parallel edges. 
We apply this step as long as possible, and let $G_4$ be the resulting graph.
The graph $G_4$ is a 3-regular graph, and, when $n\geq 22$, its treewidth
equals the treewidth of $G_1$, which is $\cw(G)+3n+2$. As we can construct
$G_4$ in polynomial time, this completes the transformation, and we can
conclude that \textsc{Treewidth} is NP-complete on 3-regular graphs.
\end{proof}

\section{Special cases}
\label{section:corollaries}
In this section, we give two  NP-completeness proofs
for \textsc{Treewidth} on special graph classes, which
follow from minor modifications of the proof of \Cref{thm:cubic}.
We first observe that for any fixed $d\geq 4$, \textsc{Treewidth} is NP-complete
on $d$-regular graphs.

\begin{proposition}
    For each $d \geq 3$, \textsc{Treewidth} is NP-complete on $d$-regular graphs.
\end{proposition}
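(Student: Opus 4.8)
The plan is to reduce from the case $d=3$, which is already established in \Cref{thm:cubic}, and to build, for each fixed $d\geq 4$, a gadget that pads up low-degree vertices to degree $d$ without changing the treewidth (as long as the treewidth is above some absolute constant, which we can always ensure by the same padding argument used in Step 3 and Step 4 of the proof of \Cref{thm:cubic}, i.e.\ by taking the input cubic graph large enough). So, given a cubic graph $G$ on $n$ vertices and an integer $k$, I would construct a $d$-regular graph $G'$ together with an integer $k'$ such that $\tw(G)\le k$ iff $\tw(G')\le k'$.

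The key construction: for each vertex $v$ of $G$, which has degree $3$, we need to add $d-3$ extra edges at $v$. The natural way to do this is to attach to each such $v$ a small $(d)$-regular-except-at-one-vertex graph $H_d$ that has a single ``port'' vertex of degree $d-(d-3)=3$; more precisely, we want a graph $H_d$ with one distinguished vertex $r$ of degree $d-3$ and all other vertices of degree exactly $d$, and with $\tw(H_d)$ small (say, a constant depending only on $d$). Identifying $r$ with $v$ (equivalently, adding $d-3$ edges between $v$ and the port of a fresh copy of $H_d$) makes $v$ have degree exactly $d$. For the internal vertices of $H_d$ already having degree $d$, nothing more is needed. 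Such an $H_d$ is easy to produce: take a large enough $d$-regular graph and delete a perfect matching on $d-3$ of its vertices, or more simply take $K_{d+1}$ and remove a small matching; one only needs existence of \emph{some} graph with one vertex of degree $d-3$ and the rest of degree $d$, which follows from elementary degree-sequence / regular-graph existence facts (e.g.\ the Erd\H{o}s--Gallai conditions, or just an explicit circulant construction). Set $G'$ to be $G$ with one fresh copy of $H_d$ glued at each vertex.

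For correctness one shows $\tw(G')=\max\{\tw(G),\,c_d\}$ where $c_d=\tw(H_d)$ is a constant. The easy direction: $G$ is a subgraph (indeed a minor, by contracting each glued copy of $H_d$ to its port) of $G'$, so $\tw(G)\le\tw(G')$; and $\tw(H_d)\le\tw(G')$ since $H_d$ is a subgraph of $G'$. For the reverse, given an optimal tree decomposition of $G$ of width $w\ge \max\{\tw(G),c_d\}$ — we may assume $w\ge c_d$ since the cutwidth-based construction of \Cref{thm:cubic} already forces the treewidth to grow with $n$, so by taking $n$ large the target width $3n+k+2$ exceeds any fixed constant $c_d$ — one attaches to the node containing $v$ a private subtree realizing an optimal tree decomposition of the copy of $H_d$ at $v$, with the bag containing $r=v$ extended by $\{v\}$; since $v$ is a cut vertex separating $H_d$ from the rest of $G'$, this is a valid tree decomposition, and each new bag has size at most $c_d+2\le w+1$. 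Hence $\tw(G')\le w$. Setting $k'=\max\{k,c_d\}$ and ensuring $k\ge c_d$ (again by taking the cutwidth instance large, which only blows up the instance polynomially) gives $\tw(G)\le k \iff \tw(G')\le k'$. Since $G'$ is constructed in polynomial time and membership in NP is immediate, NP-completeness on $d$-regular graphs follows.

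The main obstacle — though a mild one — is purely bookkeeping: one must verify that a suitable gadget $H_d$ exists for \emph{every} fixed $d\geq 4$ (not just, say, even $d$), and handle the parity constraint that a $d$-regular graph on $m$ vertices needs $dm$ even; this is why the gluing is done by \emph{adding $d-3$ edges} between $v$ and a port rather than by any identification that could create parity issues, and why one picks $H_d$ with exactly one vertex of deficient degree $d-3$ so that the whole of $G'$ becomes $d$-regular (every original vertex gains $d-3$ and every $H_d$-internal vertex already has degree $d$; the ports of the $H_d$ copies get their missing $d-3$ edges from $v$). A clean explicit choice avoiding all parity worries: let $H_d$ consist of $d-3$ disjoint copies of $K_{d+1}$ minus, from one designated copy, a set of $d-3$ edges forming a matching-like structure at a single vertex — or simply cite that for every $d\ge 2$ and every $0\le t\le d$ there is a connected graph with one vertex of degree $t$ and all others of degree $d$, which is standard. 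With $H_d$ in hand, the rest is routine.
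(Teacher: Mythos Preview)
Your reduction works for odd $d$ but fails for every even $d\ge 4$, and the failure is a genuine parity obstruction, not bookkeeping. You ask for a gadget $H_d$ with exactly one vertex of degree $d-3$ and every other vertex of degree $d$. When $d$ is even, $d-3$ is odd, so $H_d$ would have exactly one vertex of odd degree, contradicting the handshake lemma; no such $H_d$ exists. In particular your fallback claim (``for every $d\ge 2$ and every $0\le t\le d$ there is a connected graph with one vertex of degree $t$ and all others of degree $d$'') is simply false whenever $d$ is even and $t$ is odd. The explicit construction you sketch (copies of $K_{d+1}$ with edges removed at a single vertex) cannot repair this: deleting edges incident to one vertex lowers the degrees of the \emph{other} endpoints too, and no local patching can leave a single odd-degree vertex in an otherwise even-regular graph. (A side issue: identifying $r$ with $v$ is not equivalent to ``adding $d-3$ edges between $v$ and the port''; the latter would give the port degree $2(d-3)$.)

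This is precisely the obstacle the paper's proof is built around. Instead of a one-port gadget, the paper takes $K_{d+1}$ minus one edge $\{x,y\}$ and attaches \emph{both} $x$ and $y$ to the host vertex $v$; this raises $\deg(v)$ by $2$ while every gadget vertex ends at degree $d$, so parity is automatically preserved. Since one can then only move within a fixed parity class, the paper needs a separate $d=4$ base case, obtained by modifying the cubic construction (grid instead of brick wall, adjusted tree replacement, plus a small edge gadget to fix the border vertices). From there: odd $d$ is reached from the $3$-regular instance with $(d-3)/2$ gadgets per vertex, even $d$ from the $4$-regular instance with $(d-4)/2$ gadgets per vertex. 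Your argument is salvageable along the same lines---use a two-port gadget or the ``$+2$'' gadget and supply an even base case---but as written it does not cover even $d$.
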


\begin{proof}
The result for $d=3$ was given as \Cref{thm:cubic}.

    A small modification of the proof of \Cref{thm:cubic} gives the result for \emph{$4$-regular} graphs: instead of using a brick wall, use a grid. At the borders of this grid, we have vertices of degree less than 3. 
    We can avoid these by first contracting vertices of degree 2, and then noting that there is a perfect matching with the vertices of degree 3 at the sides
    of the grid. Replace each edge in this matching by a small subgraph, as shown
    in Figure~\ref{figure:degree4}. Note that this step increases the
    degree of $v$ and $w$ by one, while, when the treewidth of $G$ is at least
    5, the step will not change the treewidth of the graph.

    \begin{figure}
        \centering
        \includegraphics{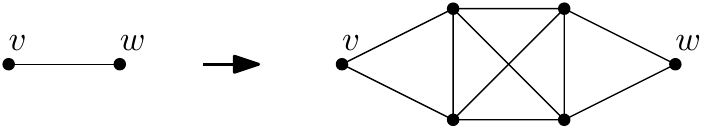}
        \caption{Increasing the degree of two adjacent vertices by one.}
        \label{figure:degree4}
    \end{figure}
    In the step where we change vertices of degree 7 to vertices of degree 3 by replacing a vertex by a small tree, we instead use a tree with the root having two children, each with three
    children. These roots are made adjacent to the grid.
    Now, the roots have degree 3, and we add an arbitrary perfect
    matching between these root vertices in $A$, and similarly for $B$. (Note that in the
    construction, there is a bag containing all roots for $A$,
    and similarly $B$; these sets have even size.)
    This gives the result for $d=4$.

    Consider the following gadget. 
    Take a clique with $d+1$ vertices, and remove one edge, say $\{x,y\}$, from this clique. For a vertex $v$ in a graph $G$, add an edge from $x$ to $v$, and an edge from $y$ to $v$. See \Cref{figure:increasedegree}.     
    \begin{figure}
    \centering
    \includegraphics{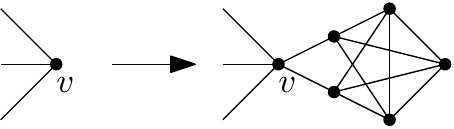}
    \caption{Increasing the degree of a vertex: if $\tw(G)\geq 4$, then the
    step increases the degree of $v$ from 3 to 5, but does not change the treewidth.}
    \label{figure:increasedegree}
\end{figure}

    If $G$ has treewidth at least $d$, then this step increases the degree of $v$ by $2$ without changing the treewidth.
    Now, if $d$ is odd, we can take an instance of the hardness proof on 3-regular graphs, and add to each vertex of that instance $(d-3)/2$ copies of this gadget. We obtain an equivalent instance that is $d$-regular.
    If $d$ is even, we add $(d-4)/2$ copies of the gadget to an instance of the hardness proof on $4$-regular graphs.
\end{proof}

A~\emph{$d$-dimensional grid graph} is a~finite induced subgraph of the infinite $d$-dimensional grid.
Observe that $d$-dimensional grid graphs have degree at most $2d$, and in particular the 3-dimensional grid graphs have degree at most 6. 
As a~consequence of lowering the degree of hard \textsc{Treewidth} instances from 9 to at most 6, we can show that computing the treewidth of 3-dimensional grid graphs is NP-complete.
Since we lowered the degree of hard instances down to at most 3, we can even show the following.

\begin{proposition}\label{prop:3d-grid}
\textsc{Treewidth} is NP-complete on subcubic 3-dimensional grid graphs.
\end{proposition}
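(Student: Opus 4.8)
The plan is to take the cubic (subcubic) graph $G_4$ produced by the proof of \Cref{thm:cubic} and show that it can be embedded in the three-dimensional grid, so that computing its treewidth is NP-complete already for subcubic $3$-dimensional grid graphs. The crucial observation is that $G_4$ is built from very structured pieces: two planar brick walls, and the trees and matching edges that replace the high-degree vertices of $G_2$; moreover $G_4$ has maximum degree $3$, which is exactly the degree bound that a subcubic $3$-dimensional grid graph can have. I would therefore argue that every graph occurring in the construction (or at least a graph with the same treewidth, obtained by further subdividing edges — recall subdivision does not change treewidth for simple graphs, by \Cref{lemma:basics}(5) together with the standard fact) admits such an embedding.

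Concretely, I would proceed as follows. First, recall (or cite) the folklore fact that every graph with maximum degree at most $3$ has a \emph{three-dimensional grid drawing}: one can route the vertices on distinct grid points and the edges along axis-parallel grid paths that are internally vertex-disjoint, at the cost of subdividing each edge a bounded (polynomial) number of times; a clean way to get this is to first obtain a planar-like layered layout of the subcubic graph and then use the third dimension to resolve crossings, sending each edge "up" into its own horizontal layer. The key points to check are that (i) the number of subdivisions introduced is polynomially bounded, so the reduction stays polynomial, and (ii) the resulting graph is an \emph{induced} subgraph of the grid — this is where one must be slightly careful, spacing the routed paths far enough apart (e.g. using only even coordinates, or inflating the grid by a constant factor) so that no two grid points used by the drawing are accidentally adjacent in the grid. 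Second, I would note that since $G_4$ already has treewidth at least $3$ (indeed much larger, for $n \geq 22$), subdividing its edges does not change its treewidth, so the subdivided graph $G_4'$ is a subcubic $3$-dimensional grid graph with $\tw(G_4') = \tw(G_4) = \cw(G) + 3n + 2$, and membership in the class is checkable in polynomial time. Combined with the NP-hardness of \textsc{Cutwidth} on cubic graphs, this gives NP-completeness of \textsc{Treewidth} on subcubic $3$-dimensional grid graphs.

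I expect the main obstacle to be the embedding step itself — specifically, producing a three-dimensional grid drawing of an arbitrary subcubic graph that is an \emph{induced} subgraph of the grid while keeping the number of edge subdivisions polynomial. The "induced" requirement is the subtle part: a naive routing might place two vertices of the drawing at grid-adjacent points even though they are non-adjacent in $G_4$, which would add spurious edges and could change the treewidth. I would handle this by a uniform dilation of the whole drawing (map each coordinate $c \mapsto 3c$, say) so that any two grid points used by the drawing are at $\ell_1$-distance at least $2$ unless they are consecutive points on a single routed edge-path; then the induced subgraph on the used points is exactly the subdivided graph $G_4'$. The rest — verifying degrees stay at most $3$, bounding the subdivision count, and checking treewidth invariance under subdivision — is routine given the tools already assembled in the paper.
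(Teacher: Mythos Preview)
Your proposal is correct and follows essentially the same approach as the paper: both argue that any (sub)cubic graph admits a polynomial-size subdivision that is an induced subgraph of the $3$-dimensional grid, and then use that subdivision preserves treewidth. The only difference is presentational: the paper gives a fully explicit embedding of an arbitrary cubic $n$-vertex graph into a $(6n-1)\times(3n+1)\times 3$ grid (placing vertices along the line $y=z=0$ and routing each edge $e_k$ at height $y=2k$, using the third coordinate to avoid crossings), whereas you invoke the embedding as folklore and handle the ``induced'' requirement by a uniform dilation---both are valid, and your detour through the specific structure of $G_4$ turns out to be unnecessary.
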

\begin{proof}
The argument is simply that every $n$-vertex (sub)cubic graph admits a~subdivision of polynomial size that is a 3-dimensional grid graph.
We give a simple such embedding.

We reduce from \textsc{Treewidth} on cubic graphs, which is NP-hard by \Cref{thm:cubic}.
Let $G$ be any cubic graph, $v_0, v_1, \ldots, v_{n-1}$ its vertices, and $e_1, e_2, \ldots, e_{3n/2}$ its edges.
We build a~subcubic induced subgraph $H$ of the $(6n-1) \times (3n+1) \times 3$ grid that is a~subdivision of $G$.
In particular, $\tw(H)=\tw(G)$ and $H$ has $O(n^2)$ vertices and edges, thus we can conclude.

For each $i \in [0,n-1]$, vertex $v_i$ is encoded by the path made by the 5 vertices $(x,0,0)$ with $x \in [6i,6i+4]$.
We arbitrarily assign $(6i,0,0)$, $(6i+2,0,0)$, $(6i+4,0,0)$ each with a~distinct neighbour of $v_i$ in $G$, say $v_{i(0)}$, $v_{i(1)}$, $v_{i(2)}$, respectively.

Every edge $e_k=\{v_i,v_j\}$ of $G$ with $i < j$ is encoded in the following way.
Let $a,b \in [0,2]$ be such that $i(a)=j$ and $j(b)=i$.
We build a path from $(6i+2a,0,0)$ to $(6j+2b,0,0)$ with degree-2 vertices, by first adding all the vertices $(6i+2a,y,0)$ and $(6j+2b,y,0)$ for $y \in [2k]$, 
then bridging $(6i+2a,2k,0)$ and $(6j+2b,2k,0)$ by adding $(6i+2a,2k,1)(6i+2a,2k,2)(6i+2a+1,2k,2)(6i+2a+2,2k,2) \ldots (6j+2b-1,2k,2)(6j+2b,2k,2)(6j+2b,2k,1)$.

This finishes the construction of $H$.
All of its vertices have degree 2, except the vertices at $(6i+2,0,0)$, which have degree 3.
It is easy to see that $H$ is a subdivision of $G$ (where each edge gets subdivided at most $12n+5$ times).
\end{proof}

We can easily adapt the previous proof to show hardness for finite subcubic (non-induced) subgraphs of the $\infty \times \infty \times 2$ grid.
\section{Conclusions}
\label{section:conclusions}

In this paper, we gave a number of NP-completeness proofs for \textsc{Treewidth}.
The first proof is an elementary reduction from \textsc{Pathwidth} to 
\textsc{Treewidth} on co-bipartite graphs; while the hardness result is long known,
our new proof has the advantage of being very simple, and presentable in a matter of
minutes. 
Our second main result is the NP-completeness proof for
\textsc{Treewidth} on cubic graphs, which improves upon the
over 25-years-old bound of degree 9.

We end this paper with a few open problems. A long standing open problem is the
complexity of \textsc{Treewidth} on planar graphs. While the famous ratcatcher
algorithm solves the related \textsc{Branchwidth} problem in polynomial time
\cite{SeymourT94}, it is still unknown whether \textsc{Treewidth} on planar graphs
is polynomial time solvable or whether it is NP-complete. Also, no NP-hardness proofs
for \textsc{Treewidth} on graphs of bounded genus, or $H$-minor free graphs for some
fixed $H$ are known. An easier open problem might be the complexity of
\textsc{Branchwidth} for graphs of bounded degree, and we conjecture that
\textsc{Branchwidth} is NP-complete on cubic graphs.

While `our' reductions are simple, the NP-hardness of \textsc{Treewidth} is derived
from the NP-hardness of \textsc{Pathwidth} or \textsc{Cutwidth}. Thus, it
would be good to have simple NP-hardness proofs for \textsc{Pathwidth} and/or
\textsc{Cutwidth}, preferably building upon `classic' NP-hard problems like
\textsc{Satisfiability}, elementary graph problems like \textsc{Clique}, 
or \textsc{Bin Packing}.

The reductions in our hardness proofs increase the parameter by a term linear in $n$,
so shed no light on the parameterised complexity of \textsc{Treewidth}. Hence, it would
be interesting to obtain parameterised reductions (i.e.\  reductions that
change $k$ to a value bounded by a function of $k$), and also aim at
lower bounds  (e.g.\ based on the (S)ETH) on the parameterised complexity of \textsc{Treewidth}.

\paragraph{Acknowledgements.}  
This research was conducted in the Lorentz Center, Leiden, the Netherlands, 
during the workshop {\it Graph Decompositions: Small Width, Big Challenges}, October 24 -- 28, 2022. 
Martin~Milani{\v c} acknowledges the support of
the Slovenian Research Agency (I0-0035, research program P1-0285 and research projects N1-0102, N1-0160, J1-3001, J1-3002, J1-3003 and J1-4008).
Dušan Knop and Ondřej Suchý acknowledge the support of the OP VVV MEYS funded project CZ.02.1.01/0.0/0.0/16\_019/0000765 ``Research Center for Informatics''.
\bibliographystyle{abbrvurl}
\bibliography{biblio}
\end{document}